\newtheorem{thm}{Theorem}
\newtheorem{lem}[thm]{Lemma}
\theoremstyle{definition}
\newtheorem{ex}[thm]{Example}
\newtheorem{rem}[thm]{Remark}
\providecommand{\abs}[1]{\lvert#1\rvert}
\begin{document}

\begin{frontmatter}
\title{A probabilistic view on predictive constructions for Bayesian learning}
\runtitle{Predictive approach to Bayesian learning}

\begin{aug}
\author[A]{\fnms{Patrizia} \snm{Berti}\ead[label=e1]{patrizia.berti@unimore.it}}
\author[B]{\fnms{Emanuela} \snm{Dreassi}\ead[label=e2]{emanuela.dreassi@unifi.it}}
\author[C]{\fnms{Fabrizio} \snm{Leisen}\ead[label=e3]{fabrizio.leisen@gmail.com}}
\author[D]{\fnms{Luca} \snm{Pratelli}\ead[label=e4]{pratel@mail.dm.unipi.it}}
\and
\author[E]{\fnms{Pietro} \snm{Rigo}\ead[label=e5]{pietro.rigo@unibo.it}}
\address[A]{Dipartimento di Scienze Fisiche, Informatiche e Matematiche, Universit\`a di Modena e Reggio-Emilia, via Campi 213/B, 41100 Modena, Italy
\printead{e1}}

\address[B]{Dipartimento di Statistica, Informatica, Applicazioni, Universit\`a di Firenze, viale Morgagni 59, 50134 Firenze, Italy
\printead{e2}}

\address[C]{School of Mathematical Sciences, University of Nottingham, University Park, Nottingham, NG7 2RD, UK
\printead{e3}}

\address[D]{Accademia Navale, viale Italia 72, 57100 Livorno,
Italy
\printead{e4}}

\address[E]{Dipartimento di Scienze Statistiche ``P. Fortunati'', Universit\`a di Bologna, via delle Belle Arti 41, 40126 Bologna, Italy
\printead{e5}}
\end{aug}

\begin{abstract}
Given a sequence $X=(X_1,X_2,\ldots)$ of random observations, a Bayesian forecaster aims to predict $X_{n+1}$ based on $(X_1,\ldots,X_n)$ for each $n\ge 0$. To this end, in principle, she only needs to select a collection $\sigma=(\sigma_0,\sigma_1,\ldots)$, called ``strategy" in what follows, where $\sigma_0(\cdot)=P(X_1\in\cdot)$ is the marginal distribution of $X_1$ and $\sigma_n(\cdot)=P(X_{n+1}\in\cdot\mid X_1,\ldots,X_n)$ the $n$-th predictive distribution. Because of the Ionescu-Tulcea theorem, $\sigma$ can be assigned directly, without passing through the usual prior/posterior scheme. One main advantage is that no prior probability is to be selected. In a nutshell, this is the predictive approach to Bayesian learning. A concise review of the latter is provided in this paper. We try to put such an approach in the right framework, to make clear a few misunderstandings, and to provide a unifying view. Some recent results are discussed as well. In addition, some new strategies are introduced and the corresponding distribution of the data sequence $X$ is determined. The strategies concern generalized P\'olya urns, random change points, covariates and stationary sequences.
\end{abstract}

\begin{keyword}[class=MSC2020]
\kwd[Primary ]{62F15}
\kwd{62M20}
\kwd[; secondary ]{60G09}
\kwd{60G25}
\end{keyword}

\begin{keyword}
\kwd{Bayesian inference}
\kwd{Conditional identity in distribution}
\kwd{Exchangeability}
\kwd{Predictive distribution}
\kwd{Sequential predictions}
\kwd{Stationarity}
\end{keyword}

\end{frontmatter}

\maketitle

\section{Introduction}\label{bt6yh}

This paper has been written having the following interpretation of Bayesian inference in mind. (We declare this interpretation from the outset just to make transparent our point of view and easier the understanding of the paper). Let us call $\mathcal{O}$ the object of inference. Roughly speaking, $\mathcal{O}$ denotes whatever we ignore but would like to know. For instance, $\mathcal{O}$ could be a parameter (finite or infinite dimensional), a set of future observations, an unknown probability distribution, the effect of some action, or something else. According to us, the distinguishing feature of the Bayesian approach is to regard $\mathcal{O}$ as the realization of a random element, and not as an unknown but fixed constant. As a consequence, the main goal of any Bayesian inferential procedure is to determine the conditional distribution of $\mathcal{O}$ given the available information.

\medskip

Note that, unless $\mathcal{O}$ itself is a parameter, no other parameter is necessarily involved.

\medskip

Prediction of unknown observable quantities is a fundamental part of statistics. Initially, it was probably the most prevalent form of statistical inference. The wind changed at the beginning of the 20$^{\text{th}}$ century when statisticians' attention shifted to other issues, such as parametric estimation and testing; see e.g. \cite{GEISSER}. Nowadays, prediction is back in the limelight again, and plays a role in modern topics including machine learning and data mining; see e.g. [\citealp{CFZ}, \citealp{BCJC}, \citealp{EFRON}, \citealp{HTF}].

\medskip

This paper deals with prediction of future observations, based on the past ones, from the Bayesian point of view. Precisely, we focus on a sequence
$$X=(X_1,X_2,\ldots)$$
of random observations and, at each time $n$, we aim to predict $X_{n+1}$ based on $(X_1,\ldots,X_n)$. Hence, for each $n$, the object of inference is $\mathcal{O}=X_{n+1}$, the available information is $(X_1,\ldots,X_n)$, and the target is the {\em predictive distribution} $P(X_{n+1}\in\cdot\mid X_1,\ldots,X_n)$. We point out that, apart from technicalities, most of our considerations could be generalized to the case where $\mathcal{O}$ is an arbitrary (measurable) function of the future observations, say
$$\mathcal{O}=f(X_{n+1},X_{n+2},\ldots).$$
This case is recently object of increasing attention; see e.g. [\citealp{FHW2021}, \citealp{HSOLO}].

\medskip

No parameter $\theta$ plays a role at this stage. The forecaster may involve some $\theta$, if she thinks it helps, but she is not interested in $\theta$ as such. To involve $\theta$ means to model the probability distribution of $X$ as depending on $\theta$, and then to exploit this fact to calculate the predictive distributions $P(X_{n+1}\in\cdot\mid X_1,\ldots,X_n)$.

\medskip

To better address our prediction problem, it is convenient to introduce the notion of {\em strategy}. Let $(S,\mathcal{B})$ be a measurable space, with $S$ to be viewed as the set where the observations $X_n$ take values. Following Dubins and Savage \cite{DS1965}, a strategy is a sequence
$$\sigma=(\sigma_0,\sigma_1,\sigma_2,\ldots)$$
such that

\begin{itemize}

\item $\sigma_0$ and $\sigma_n(x)$ are probability measures on $\mathcal{B}$ for all $n\ge 1$ and $x\in S^n$;

\item The map $x\mapsto\sigma_n(x,A)$ is $\mathcal{B}^n$-measurable for fixed $n\ge 1$ and $A\in\mathcal{B}$.

\end{itemize}

Here, $\sigma_0$ should be regarded as the marginal distribution of $X_1$ and $\sigma_n(x)$ as the conditional distribution of $X_{n+1}$ given that $(X_1,\ldots,X_n)=x$. Moreover, $\sigma_n(x,A)$ denotes the value taken at $A$ by the probability measure $\sigma_n(x)$. We also note that strategies are often called {\em prediction rules} in the framework of species sampling sequences; see \cite[p. 251]{PIT1996}.

\medskip

Strategies are a natural tool to frame a prediction problem from the Bayesian standpoint. In fact, a strategy $\sigma$ can be regarded as the collection of all predictive distributions (including the marginal distribution of $X_1$) in the sense that $\sigma_n(x,\,\cdot)=P\bigl(X_{n+1}\in\cdot\mid (X_1,\ldots,X_n)=x\bigr)$ for all $n\ge 0$ and $x\in S^n$. Thus, in a sense, everything a Bayesian forecaster has to do is to select a strategy $\sigma$. Obviously, the problem is how to do it. A related problem is whether, in order to choose $\sigma$, involving a parameter $\theta$ is convenient or not.

\medskip

An important special case is exchangeability. In fact, if $X$ is assumed to be exchangeable, there is natural way to involve a parameter $\theta$. To see this, take the parameter space $\Theta$ as
$$\Theta=\bigl\{\text{all probability measures on }\mathcal{B}\bigr\}.$$
Moreover, for each $\theta\in\Theta$, denote by $P_\theta$ a probability measure which makes $X$ i.i.d. with common distribution $\theta$, i.e.,
$$
P_\theta\bigl(X_1\in A_1,\ldots,X_n\in A_n\bigr)=\prod_{i=1}^n\theta(A_i)
$$
for all $n\ge 1$ and $A_1,\ldots,A_n\in\mathcal{B}$. Then, under mild conditions on $(S,\mathcal{B})$, de Finetti's theorem yields
\begin{gather*}
P(X\in\cdot)=\int_\Theta P_\theta(X\in\cdot)\,\pi(d\theta)
\end{gather*}
for some (unique) prior probability $\pi$ on $\Theta$. Thus, conditionally on $\theta\in\Theta$, the observations are i.i.d. with common distribution $\theta$. This suggests calculating the strategy $\sigma$ as follows.

\medskip

\begin{itemize}

\item[(i)] Select a prior $\pi$ on $\Theta$;

\item[(ii)] For each $n\ge 1$ and $x\in S^n$, evaluate the posterior of $\theta$ given $x$, namely, the conditional distribution of $\theta$ given that $(X_1,\ldots,X_n)=x$;

\item[(iii)] Calculate $\sigma$ as
$$\sigma_n(x,A)=\int_\Theta\theta(A)\,\pi_n(d\theta\mid x)\quad\quad\text{for all }A\in\mathcal{B},$$
where $\pi_n(\cdot\mid x)$ is the posterior and $\pi_0(\cdot\mid x)$ is meant as $\pi_0(\cdot\mid x)=\pi$.
\end{itemize}

\medskip

Steps (i)-(ii)-(iii) are familiar in a Bayesian framework. Henceforth, if $\sigma$ is selected via (i)-(ii)-(iii), the forecaster is said to follow the {\em inferential approach} (I.A.).

\medskip

\subsection{Predictive approach to Bayesian modeling} There is another approach to Bayesian prediction, usually called the {\em predictive approach} (P.A.), which is quite recurrent in the Bayesian literature and recently gained increasing attention. (Such an approach, incidentally, has been referred to as the ``non-standard approach" in [\citealp{BDPR2021}, \citealp{BPCID}]). According to P.A., the forecaster directly selects her strategy $\sigma$. Merely, for each $n\ge 0$, she selects the predictive $\sigma_n$ without passing through the prior/posterior scheme described above. Among others, P.A. is supported by de Finetti, Savage, Dubins [\citealp{DEF1931}, \citealp{DFB1937}, \citealp{DS1965}] and more recently by Diaconis and Regazzini [\citealp{BRR1997}, \citealp{CR1996}, \citealp{DY1979}, \citealp{DF1990}, \citealp{FLR2000}]. P.A. is also strictly connected to Dawid's prequential approach [\citealp{DAW1984}, \citealp{DAW1992}, \citealp{DV1999}] and to Pitman's treatment of species sampling sequences [\citealp{PIT1996}, \citealp{PITYOR}, \citealp{PIT06}]. In addition, several prediction procedures arising in non-necessarily Bayesian frameworks, such as Machine Learning and Data Mining, are consistent with P.A.; see e.g. [\citealp{CFZ}, \citealp{BCJC}, \citealp{EFRON}, \citealp{HTF}]. Some further related references are [\citealp{BDPR2021}, \citealp{BPCID}, \citealp{FHW2021}, \citealp{FL2022}, \citealp{FP2012}, \citealp{HSOLO}, \citealp{HMW}, \citealp{HILL}].

\medskip

The theoretical foundation of P.A. is the Ionescu-Tulcea theorem; see e.g. \cite[p. 159]{HJ1994}. Roughly speaking this theorem states that, to assign the joint distribution of $X$, it suffices to choose, {\em in an arbitrary way}, the marginal  distribution of $X_1$, the conditional distribution of $X_2$ given $X_1$, the conditional distribution of $X_3$ given $(X_1,X_2)$, and so on. Note that this fact would be obvious if $X$ would be replaced by a finite dimensional random vector $(X_1,\ldots,X_m)$. So, in a sense, the Ionescu-Tulcea theorem extends to infinite sequences a straightforward property of finite dimensional vectors. In any case, a formal statement of the theorem is as follows.

\begin{thm}\label{tulc}\textbf{(Ionescu-Tulcea).}
For each $n\ge 1$, let $X_n$ be the $n$-th coordinate random variable on $(S^\infty,\mathcal{B}^\infty)$. Then, for any strategy $\sigma$, there is a unique probability measure $P_\sigma$ on $(S^\infty,\mathcal{B}^\infty)$ such that
\begin{gather}\label{itt}
P_\sigma(X_1\in\cdot)=\sigma_0(\cdot)\quad\quad\text{and}
\\P_\sigma\bigl(X_{n+1}\in\cdot\mid (X_1,\ldots,X_n)=x\bigr)=\sigma_n(x,\cdot)\notag
\end{gather}
for all $n\ge 1$ and $P_\sigma$-almost all $x\in S^n$.
\end{thm}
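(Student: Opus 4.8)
The plan is to construct $P_\sigma$ from its finite-dimensional marginals and then extend it to $\mathcal{B}^\infty$ via Carath\'eodory's theorem, the delicate point being countable additivity on the algebra of cylinders. First I would define, for each $n\ge 1$, a probability measure $P_n$ on $(S^n,\mathcal{B}^n)$ by iterated integration against the kernels,
$$P_n(B)=\int_S \sigma_0(dx_1)\int_S \sigma_1(x_1,dx_2)\cdots\int_S \sigma_{n-1}(x_1,\dots,x_{n-1},dx_n)\,\mathbf{1}_B(x_1,\dots,x_n),\qquad B\in\mathcal{B}^n.$$
A preliminary (routine but not vacuous) step is to check, by a functional monotone class argument, that each inner integral defines a $\mathcal{B}^{k}$-measurable function of $(x_1,\dots,x_k)$; this is exactly where the measurability requirement on $x\mapsto\sigma_n(x,A)$ enters. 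Since every $\sigma_n(x,\cdot)$ is a probability measure, the family $\{P_n\}$ is projective: the image of $P_{n+1}$ under the projection $S^{n+1}\to S^n$ is $P_n$. Setting $P_\sigma(B\times S^\infty)=P_n(B)$ on the algebra $\mathcal{A}$ of cylinders then gives a well defined, finitely additive set function.

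The main obstacle is to upgrade finite additivity to countable additivity on $\mathcal{A}$, equivalently to prove continuity at the empty set: if $C_k\downarrow\emptyset$ are cylinders, then $P_\sigma(C_k)\to 0$. Unlike Kolmogorov's extension theorem, no topological assumption on $(S,\mathcal{B})$ is available here, so the argument must exploit the kernel structure directly. Arguing by contradiction, suppose $P_\sigma(C_k)\ge\varepsilon>0$ for all $k$, and (padding coordinates) assume $C_k=B_k\times S^\infty$ with $B_k\in\mathcal{B}^k$. For each $j\le k$ I would form the partial integral $f^{(j)}_k(x_1,\dots,x_j)$ obtained by integrating $\mathbf{1}_{B_k}$ over the coordinates $x_{j+1},\dots,x_k$ against $\sigma_j,\dots,\sigma_{k-1}$, so that $f^{(k)}_k=\mathbf{1}_{B_k}$ and $f^{(0)}_k=P_\sigma(C_k)$. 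Each $f^{(j)}_k$ decreases in $k$ because $C_k\downarrow$; let $f^{(j)}=\lim_k f^{(j)}_k$.

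The heart of the proof is a recursive selection. Since $\int_S f^{(1)}\,d\sigma_0=\lim_k P_\sigma(C_k)\ge\varepsilon$, there is a point $x_1^*$ with $f^{(1)}(x_1^*)\ge\varepsilon$; using the recursion $f^{(1)}(x_1^*)=\int_S f^{(2)}(x_1^*,x_2)\,\sigma_1(x_1^*,dx_2)$ and monotone convergence one finds $x_2^*$ with $f^{(2)}(x_1^*,x_2^*)\ge\varepsilon$, and so on. Because $f^{(m)}\le f^{(m)}_m=\mathbf{1}_{B_m}$, the point $x^*=(x_1^*,x_2^*,\dots)$ satisfies $\mathbf{1}_{B_m}(x_1^*,\dots,x_m^*)\ge\varepsilon>0$, i.e. $x^*\in C_m$, for every $m$—contradicting $\bigcap_k C_k=\emptyset$. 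Hence $P_\sigma$ is countably additive on $\mathcal{A}$, and Carath\'eodory's theorem extends it uniquely to $\mathcal{B}^\infty=\sigma(\mathcal{A})$.

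Finally I would verify the two defining equalities in \eqref{itt}. The marginal statement $P_\sigma(X_1\in\cdot)=\sigma_0$ is immediate from the construction, while the conditional one follows because, for $A\in\mathcal{B}$ and $B\in\mathcal{B}^n$, the definition of $P_{n+1}$ yields $\int_B\sigma_n(x,A)\,P_n(dx)=P_{n+1}(B\times A)=P_\sigma(X\in B\times A\times S^\infty)$, which is precisely the defining property of a version of $P_\sigma\bigl(X_{n+1}\in A\mid(X_1,\dots,X_n)=x\bigr)$. Uniqueness of $P_\sigma$ is then clear: any measure satisfying \eqref{itt} agrees with the one just built on the $\pi$-system $\mathcal{A}$, hence on all of $\mathcal{B}^\infty$ by Dynkin's lemma.
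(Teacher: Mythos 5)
Your proposal is correct, but there is nothing in the paper to compare it against: Theorem \ref{tulc} is stated as a classical result and the paper simply cites Hoffmann--Jorgensen for it, providing no proof of its own (the Appendix only proves Theorems \ref{7y7u8bf}, \ref{z34y9n1s}, \ref{g78j92xcd1q} and \ref{c5rt7n}). What you have written is the standard textbook proof of the Ionescu--Tulcea theorem, and all the essential points are in place: the iterated-integral construction of the finite-dimensional laws, the projectivity check, the reduction of countable additivity to continuity at $\emptyset$, and -- the genuinely non-trivial step -- the recursive pointwise selection of $x_1^*,x_2^*,\dots$, which is exactly what replaces the topological compactness argument of Kolmogorov's extension theorem and is why no regularity of $(S,\mathcal{B})$ is needed. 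One spot where a reader might pause is your claim that $f^{(1)}(x_1^*)\ge\varepsilon$ for \emph{some} $x_1^*$ (and likewise at each later stage): this does hold as stated, since if $f^{(j+1)}<\varepsilon$ everywhere then $\int f^{(j+1)}\,d\sigma_j<\varepsilon$ strictly for a probability kernel, contradicting $f^{(j)}(x_1^*,\dots,x_j^*)\ge\varepsilon$; alternatively one can run the selection with a decreasing sequence $\varepsilon_j\downarrow\varepsilon/2$, since all that is ultimately used is strict positivity of $\mathbf{1}_{B_m}(x_1^*,\dots,x_m^*)$. The verification of \eqref{itt} and the uniqueness via the cylinder $\pi$-system are routine and correctly handled.
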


Because of Theorem \ref{tulc}, to make predictions on the sequence $X$, the forecaster is free to select an arbitrary strategy $\sigma$. In fact, for any $\sigma$, there is a (unique) probability distribution for $X$, denoted above by $P_\sigma$, whose predictives $P_\sigma\bigl(X_{n+1}\in\cdot\mid X_1,\ldots,X_n\bigr)$ agree with $\sigma$ in the sense of equation \eqref{itt}.

\medskip

The strengths and weaknesses of I.A. versus P.A. are discussed in a number of papers; see e.g. [\citealp{BDPR2021}, \citealp{BCJC}, \citealp{EFRON}, \citealp{GEISSER}, \citealp{GALIIT2010}] and references therein. Here, we summarize this issue (from our point of view) under the assumption that prediction is the main target.

\medskip

I.A. is not motivated by prediction alone. The main goal of I.A. is to make inference on other features of the data distribution (typically some parameters) and in this case the prior $\pi$ is fundamental. It should be added that $\pi$ often provides various meaningful information on the data generating process. However, to assess $\pi$ is not an easy task. In addition, once $\pi$ is selected, to evaluate the posterior $\pi_n(\cdot\mid x)$ is quite difficult as well. Frequently, $\pi_n(\cdot\mid x)$ cannot be written in closed form but only approximated numerically. In short, I.A. is a cornerstone of Bayesian inference, but, when prediction is the main target, it is actually quite involved.

\medskip

In turn, P.A. has essentially four merits. First, P.A. allows to avoid an explicit choice of the prior $\pi$. Indeed, when prediction is the main target, why select $\pi$ explicitly ? Rather than wondering about $\pi$, it seems reasonable to reflect on how the information in $(X_1,\ldots,X_n)$ is conveyed in the prediction of $X_{n+1}$. Second, the data sequence $X$ is not required any distributional assumption. This point is developed in Subsections \ref{xxz34} and \ref{zq219uj}. By now, we stress a consequence of such a point. The Bayesian nature of a prediction procedure does not depend on the data distribution. For instance, a forecaster applying P.A. is certainly Bayesian independently of the distribution attached to $X$. Third, P.A. requires the assignment of probabilities
on observable facts only. The value of $X_{n+1}$ is actually observable, while
$\pi$ and $\pi_n$ (being probabilities on $\Theta$) do not necessarily deal with observable facts. Fourth, the strategy $\sigma$ may be assigned stepwise. At each time $n$, the forecaster has observed $x=(x_1,\ldots,x_n)\in S^n$ and has already selected $\sigma_0,\sigma_1(x_1),\ldots,\sigma_{n-1}(x_1,\ldots,x_{n-1})$. Then, to predict $X_{n+1}$, she is still free to select $\sigma_n(x)$ as she wants. No choice of $\sigma_n(x)$ is precluded. This is consistent with the Bayesian view, where the observed data are fixed and one should condition on them. In spite of these advantages, P.A. has an obvious drawback. In fact, assigning a strategy $\sigma$ directly may be very difficult, in principle as difficult as selecting a prior $\pi$.

\medskip

A last (basic) remark is that, if $X$ is exchangeable, both I.A. and P.A. completely determine the probability distribution of $X$. Selecting a prior $\pi$ or choosing a strategy $\sigma$ are just equivalent routes to fix the distribution of $X$. In particular, selecting $\sigma$ uniquely determines $\pi$. An intriguing line of research is in fact to identify the prior corresponding to a given $\sigma$; see e.g. [\citealp{DIRGEN}, \citealp{DY1979}, \citealp{DF1990}, \citealp{FLR2000}].

\subsection{Characterizations}\label{xxz34}
Recall that, for any strategy $\sigma$, there is a unique probability measure $P_\sigma$ on $(S^\infty,\mathcal{B}^\infty)$ satisfying condition \eqref{itt}.

\medskip

In principle, when applying P.A., the data sequence $X$ is free to have any probability distribution. Nevertheless, in most applications, it is reasonable (if not mandatory) to impose some conditions on $X$. For instance, the forecaster may wish $X$ to be exchangeable, or stationary, or Markov, or a martingale, and so on. In these cases, $\sigma$ is subjected to some constraints. If $X$ is required to be exchangeable, for instance, $\sigma$ should be such that $P_\sigma$ is exchangeable. Hence, those strategies $\sigma$ which make $P_\sigma$ exchangeable should be characterized.

\medskip

More generally, fix any collection $\mathcal{C}$ of probability measures on $(S^\infty,\mathcal{B}^\infty)$ and suppose the data distribution is required to belong to $\mathcal{C}$. Then, P.A. gives rise to the following problem:
\begin{itemize}

\item[ ]\textbf{Problem (*):} Characterize those strategies $\sigma$ such that $P_\sigma\in\mathcal{C}$.

\end{itemize}
Sometimes, Problem (*) is trivial (Markov, martingales) but sometimes it is not (stationarity, exchangeability). To illustrate, we mention three examples (which correspond to the three dependence forms examined in the sequel).

\medskip

In the exchangeable case, Problem (*) admits a solution \cite[Th. 3.1]{FLR2000} but the conditions on $\sigma$ are quite hard to check in real problems. Hence, applying P.A. to exchangeable data is usually difficult (even if there are some exceptions; see Section \ref{d5tn9j}).

\medskip

A condition weaker than exchangeability is conditional identity in distribution. Say that $X$ is {\em conditionally identically distributed} (c.i.d.) if $X_2\overset{d}=X_1$ and, for each $n\ge 1$, the conditional distribution of $X_k$ given $(X_1,\ldots,X_n)$ is the same for all $k>n$; see Section \ref{xe5ty7}. It can be shown that
$$X\text{ is exchangeable}\quad\Leftrightarrow\quad X\text{ is stationary and c.i.d.;}$$
see [\citealp{BPR2004}, \citealp{K}]. Hence, conditional identity in distribution can be regarded as one of the two basic ingredients of exchangeability (the other being stationarity). Now, in the c.i.d. case, Problem (*) has been solved \cite[Th. 3.1]{BPR2012} and the conditions on $\sigma$ are quite simple. The class of admissible strategies includes several meaningful elements which cannot be used if $X$ is required to be exchangeable. As a consequence, P.A. works quite well for c.i.d. data; see [\citealp{BDPR2021}, \citealp{BPCID}].

\medskip

The stationary case is more involved. In fact, to our knowledge, there is no general characterization of the strategies $\sigma$ which make $P_\sigma$ stationary. However, such a characterization is available in some meaningful special cases (e.g. when $P_\sigma$ is also required to be Markov); see Section \ref{x76t}.

\medskip

Finally, Problem (*) is usually easier in a few (meaningful) special cases. For instance, Problem (*) is simpler if $P_\sigma$ is also asked to be Markov; see e.g. \cite{FPB2017} and Section \ref{x76t}. Or else, if the strategy $\sigma$ is required to be dominated.

\begin{itemize}
\item[ ]\textbf{Dominated strategies:} Let $\lambda$ be a $\sigma$-finite measure on $(S,\mathcal{B})$. Say that a strategy $\sigma$ is dominated by $\lambda$ if each $\sigma_n(x)$ admits a density $f_n(\cdot\mid x)$ with respect to $\lambda$, namely,
\begin{gather*}
\sigma_0(dy)=f_0(y)\,\lambda(dy)\quad\quad\text{and}
\\\sigma_n(x,\,dy)=f_n(y\mid x)\,\lambda(dy)
\end{gather*}
for all $n\ge 1$ and $x\in S^n$. Here, $f_0:S\rightarrow\mathbb{R}^+$ and $f_n:S\times S^n\rightarrow\mathbb{R}^+$ are non-negative measurable functions.
\end{itemize}

For instance, if $S=\mathbb{R}$ and $\sigma_n(x)$ is a non-degenerate normal distribution for all $n$ and $x$, then $\sigma$ is dominated by $\lambda=\,$Lebesgue measure. Or else, if $S$ is countable, any strategy is dominated by $\lambda=\,$counting measure. Instead, if $S$ is uncountable, a non-dominated strategy is $\sigma_n(x_1,\ldots,x_n)=\delta_{x_n}$ where $\delta_{x_n}$ denotes the unit mass at the point $x_n$. Another non-dominated strategy is the empirical measure $\sigma_n(x_1,\ldots,x_n)=(1/n)\,\sum_{i=1}^n\delta_{x_i}.$

\medskip

In a sense, dominated strategies play an analogous role to the usual dominated models in parametric statistical inference. The main advantage is that one can use the conditional density $f_n(\cdot\mid x)$ instead of the conditional measure $\sigma_n(x)$. A related advantage is that, if one fixes $\lambda$ and restricts to strategies dominated by $\lambda$, Problem (*) becomes simpler. However, even in applied data analysis, various familiar strategies are not dominated. In the framework of species sampling sequences, for instance, most strategies are not dominated. Therefore, in this paper, we focus on general strategies while the dominated ones are regarded as an important special case.

\medskip

\subsection{Content of this paper and further notation}\label{zq219uj}
This is a review paper on P.A. which also includes some (minor) new results. Our perspective is mainly on the probabilistic aspects of Bayesian predictive constructions. Moreover, we tacitly assume that the major target is to predict future observations (and not to make inference on other random elements, such as random parameters).

\medskip

Essentially, we aim to achieve three goals. First, we try to put P.A. in the right framework, to provide a unifying view, and to make clear a few misunderstandings. This has been done in the Introduction. Second, in Section \ref{d5tn9j} and Subsection \ref{z310jp}, we report some known results. Third, we provide some new strategies and we prove a few related results. The strategies, introduced by means of examples, deal with generalized P\'olya urns, random change points, covariates and stationary sequences. The results consist in determining the distribution of the data sequence $X$ under such strategies. To our knowledge, Examples \ref{wh77g5}, \ref{h8uj92w}, \ref{3w34r6gc1}, \ref{e41z9mht} and Theorems \ref{z34y9n1s}, \ref{g78j92xcd1q}, \ref{c5rt7n} are actually new, while Theorem \ref{7y7u8bf} makes precise a claim contained in \cite{FHW2021}. Moreover, as far as we know, Section \ref{x76t} is the first attempt to develop P.A. for stationary data. It provides a brief discussion of Problem (*) and introduces two large classes of stationary sequences.

\medskip

As already noted, even if $X$ could be potentially given any distribution, in most applications $X$ is required some conditions. There is obviously a number of such conditions. Among them, we decided to focus on exchangeability, stationarity and conditional identity in distribution. This choice seems reasonable to keep the paper focused, but of course it leaves out various interesting conditions, such as partial exchangeability. To write a paper of reasonable length, however, some choice was necessary.

\medskip

To defend our choice, we note that, in addition to be natural in various practical problems, exchangeability is the usual assumption in Bayesian prediction. Hence, taking exchangeability into account is more or less mandatory. Moreover, since $X$ is exchangeable if and only if it is stationary and c.i.d., the other two conditions can be motivated as the basic components of exchangeability. But there are also other reasons for dealing with them. Stationarity is in fact a routine assumption in the classical treatment of time series, and it is reasonable to consider it from the Bayesian point of view as well. Conditional identity in distribution, even if not that popular, seems to be quite suitable for P.A.; see Section \ref{xe5ty7}.

\medskip

The rest of the paper is organized in three sections, each concerned with a specific assumption on $X$, plus a final section of open problems. All the proofs are gathered in the Appendix.

\medskip

We close this Introduction with some further notations.

\medskip

As usual, $\delta_u$ is the unit mass at the point $u$. For each $x\in S^n$, where $n$ is a positive integer or $n=\infty$, we denote by $x_i$ the $i$-th coordinate of $x$. Moreover, we take $X$ to be the sequence of coordinate random variables on $S^\infty$, namely,
\begin{gather*}
X_i(x)=x_i\quad\quad\text{for all }i\ge 1\text{ and }x\in S^\infty.
\end{gather*}

From now on, we fix a strategy $\sigma$ and we assume
\begin{gather*}
X\overset{d}= P_\sigma.
\end{gather*}
We write $\nu$ instead of $\sigma_0$ (i.e., we let $\sigma_0=\nu$). Hence, $\nu$ is a probability measure on $\mathcal{B}$ to be regarded as the distribution of $X_1$ under the strategy $\sigma$. Finally, to avoid technicalities, $S$ is assumed to be a Borel subset of a Polish space and $\mathcal{B}$ the Borel $\sigma$-field on $S$.

\medskip

\section{Exchangeable data}\label{d5tn9j}

A permutation of $S^n$ is a map $\phi:S^n\rightarrow S^n$ of the form
\begin{gather*}
\phi(x)=(x_{j_1},\ldots,x_{j_n})\quad\quad\text{for all }x\in S^n
\end{gather*}
where $(j_1,\ldots,j_n)$ is a fixed permutation of $(1,\ldots,n)$. A sequence $Y=(Y_1,Y_2,\ldots)$ of random variables is {\em exchangeable} if
\begin{gather*}
\phi(Y_1,\ldots,Y_n)\overset{d}= (Y_1,\ldots,Y_n)
\end{gather*}
for all $n\ge 2$ and all permutations $\phi$ of $S^n$.

\medskip

As noted in Subsection \ref{xxz34}, if $X$ is required to be exchangeable, applying P.A. is usually hard. But there are a few exceptions and two of them are discussed in this section. We first recall that $X$ is a Dirichlet sequence (or a P\'olya sequence, see \cite{BMQ}) if
\begin{gather*}
\sigma_n(x)=\frac{c\,\nu+\sum_{i=1}^n\delta_{x_i}}{n+c}\quad\quad\text{for all }n\ge 0\text{ and }x\in S^n,
\end{gather*}
where $c>0$ is a constant, $\nu$ a probability measure on $\mathcal{B}$, and $\sigma_0(x)$ is meant as $\sigma_0(x)=\nu$. The role of Dirichlet sequences is actually huge in various frameworks, including Bayesian nonparametrics, population genetics, ecology, combinatorics and number theory; see e.g. [\citealp{FERG}, \citealp{GHOSVANDER}, \citealp{HHMW}, \citealp{PIT1996}, \citealp{PITYOR}, \citealp{PIT06}]. From our point of view, however, two facts are to be stressed. First, a Dirichlet sequence is exchangeable. Second, being defined through its predictive distributions, a Dirichlet sequence is a natural candidate for P.A.

\medskip

\subsection{Species sampling sequences}\label{g6l2xx5}
For $n\ge 1$ and $x=(x_1,\ldots,x_n)\in S^n$, denote by $k_n=k_n(x)$ the number of distinct values in the vector $x$ and by $x_1^*,\ldots,x_{k_n}^*$ such distinct values (in the order that they appear). Say that $X$ is a {\em species sampling sequence} if it is exchangeable, $\sigma_0=\nu$ is non-atomic, and
\begin{gather*}
\sigma_n(x)=\sum_{j=1}^{k_n}p_{j,n}(x)\,\delta_{x_j^*}+q_n(x)\,\nu
\\\text{for all }n\ge 1\text{ and }x\in S^n
\end{gather*}
where the $p_{j,n}$ are non-negative measurable functions on $S^n$ and $q_n=1-\sum_{j=1}^{k_n}p_{j,n}$. Under this strategy, quoting from \cite[p. 253]{HANPIT}, $X$ can be regarded as: ``the sequence of species of individuals in a process
of sequential random sampling from some hypothetical infinite population of individuals of various species.
The species of the first individual to be observed is assigned a random tag $X_1=X_1^*$ distributed according
to $\nu$. Given the tags $X_1,\ldots,X_n$ of the first $n$ individuals observed, it is supposed that the next individual
is one of the $j$-th species observed so far with probability $p_{j,n}$, and one of a new species with probability $q_n$".

\medskip

A nice consequence of the definition is that $p_{j,n}(x)$ depends on $x$ only through the vector $(N_{1,n},\ldots,N_{k_n,n})$, where
$$N_{j,n}=N_{j,n}(x)=\text{card}\bigl\{k:1\le k\le n,\,x_k=x_j^*\bigr\}$$
is the number of times that $x_j^*$ appears in the vector $x$; see [\citealp{HANPIT}, \citealp{PIT1996}].

\medskip

The most popular example of species sampling sequence is probably the {\em two-parameter Poisson-Dirichlet}, introduced by Pitman in \cite{PIT2PAR}, which corresponds to the weights
\begin{gather*}
p_{j,n}(x)=\frac{N_{j,n}-b}{n+c}\quad\text{and}\quad q_n(x)=\frac{b\,k_n+c}{n+c}
\end{gather*}
where $b$ and $c$ are constants such that: either (i) $0\le b< 1$ and $c>-b$ or (ii) $b<0$ and $c=-m\,b$ for some integer $m\ge 2$. In this model, if $L$ denotes the number of distinct values appearing in the sequence $X$, one obtains $L\overset{a.s.}=\infty$ under (i) and $L\overset{a.s.}=m$ under (ii). Note also that $X$ reduces to a Dirichlet sequence in the special case $b=0$.

\medskip

Another example, due to \cite{GNED10}, is
\begin{gather*}
p_{j,n}(x)=\frac{(N_{j,n}+1)(n-k_n+b)}{n^2+bn+c}
\\\text{and}\quad q_n(x)=\frac{k_n^2-bk_n+c}{n^2+bn+c}
\end{gather*}
where $b>0$ and $c$ is such that $k^2+bk+c>0$ for all integers $k>0$. This time, unlike the two-parameter Poisson-Dirichlet, $L$ is a finite but non-degenerate random variable.

\medskip

In general, to obtain a species sampling sequence, the forecaster needs to select $\nu$ and the weights $p_{j,n}$. While the choice of $\nu$ is free (apart from non-atomicity) the $p_{j,n}$ are subjected to the constraint that $X$ should be exchangeable. (Incidentally, the choice of $p_{j,n}$ is a good example of the difficulty of applying P.A. when $X$ is required to be exchangeable). The usual method to select $p_{j,n}$ involves {\em exchangeable random partitions}. Let $\mathbb{N}=\bigl\{1,2,\ldots\bigr\}$ and let $\Pi$ be a random partition of $\mathbb{N}$. For each $n\ge 1$, call $\Pi_n$ the restriction of $\Pi$ to $\{1,\ldots,n\}$, namely, the random partition of $\{1,\ldots,n\}$ whose elements are of the form $\{1,\ldots,n\}\cap A$ for some $A\in\Pi$. Say that $\Pi$ is exchangeable if
$$\varphi(\Pi_n)\overset{d}=\Pi_n$$
for all $n\ge 1$ and all permutations $\varphi$ of $(1,\ldots,n)$, where $\varphi(\Pi_n)$ denotes the random partition $\varphi(\Pi_n)=\bigl\{\varphi(B):B\in\Pi_n\bigr\}$. For instance, given any sequence $Y=(Y_1,Y_2,\ldots)$ of random variables, define $\Pi$ to be the random partition of $\mathbb{N}$ induced by the equivalence relation $i\sim j$ $\Leftrightarrow$ $Y_i=Y_j$. Then, $\Pi$ is exchangeable provided $Y$ is exchangeable. Now, the weights $p_{j,n}$ of a species sampling sequence correspond, in a canonical way, to the probability law of an exchangeable partition; see [\citealp{PIT2PAR}, \citealp{PIT1996}]. Hence, choosing the $p_{j,n}$ essentially amounts to choosing an exchangeable partition. We stop here since a detailed discussion of exchangeable partitions is bejond the scopes of this paper. The interested reader is referred to [\citealp{GNEDPIT}, \citealp{GNED10}, \citealp{QMT13}, \citealp{AIS08}, \citealp{PIT2PAR}, \citealp{PIT06}] and references therein.

\medskip

A last remark is that the definition of species sampling sequences can be generalized. In particular, non-atomicity of $\nu$ can be dropped (as in \cite{BL20} and \cite{CLNP}) and exchangeability can be replaced by some weaker condition (as in \cite{ACBLG2014} and \cite{BCL}).

\medskip

\subsection{Kernel based Dirichlet sequences}\label{zd5n88y}

In \cite{DIRGEN}, to generalize Dirichlet sequences while preserving their main properties, a class of strategies has been introduced. Among other things, such strategies make $X$ exchangeable.

\medskip

A {\em kernel} $\alpha$ on $(S,\mathcal{B})$ is a collection
$$\alpha=\bigl\{\alpha(\cdot\mid x):x\in S\bigr\}$$
such that $\alpha(\cdot\mid x)$ is a probability measure on $\mathcal{B}$, for each $x\in S$, and the map $x\mapsto\alpha(A\mid x)$ is measurable for each $A\in\mathcal{B}$. Sometimes, to make the notation easier, we will write $\alpha_x$ instead of $\alpha(\cdot\mid x)$. A straightforward example of kernel is $\alpha_x=\delta_x$ for each $x\in S$.

\medskip

Fix a probability measure $\nu$ on $\mathcal{B}$, a constant $c>0$, a kernel $\alpha$ on $(S,\mathcal{B})$, and define the strategy
\begin{gather}\label{v690k3e}
\sigma_n(x)=\frac{c\,\nu+\sum_{i=1}^n\alpha_{x_i}}{n+c}
\end{gather}
for all $n\ge 0$ and $x\in S^n$. Clearly, $X$ reduces to a Dirichlet sequence if $\alpha=\delta$. In this case, we also say that $X$ is a {\em classical} Dirichlet sequence.

\medskip

If $\alpha$ is an arbitrary kernel, $X$ may fail to be exchangeable. However, a useful sufficient condition for exchangeability is available. In fact, $X$ is exchangeable if $\alpha$ agrees with the conditional distribution for $\nu$ given some sub-$\sigma$-field $\mathcal{G}\subset\mathcal{B}$. For instance, if $\mathcal{G}=\mathcal{B}$, then $\alpha=\delta$ and $X$ is a classical Dirichlet sequence. At the opposite extreme, if $\mathcal{G}$ is the trivial $\sigma$-field, then $\alpha_x=\nu$ for all $x\in S$ and $X$ is i.i.d. with common distribution $\nu$. In general, for fixed $\nu$ and $c$, a strategy $\sigma$ which makes $X$ exchangeable can be associated with any sub-$\sigma$-field $\mathcal{G}\subset\mathcal{B}$. It suffices to take $\alpha$ as the conditional distribution for $\nu$ given $\mathcal{G}$.

\begin{ex}\label{g6s3}\textbf{(Countable partitions).}
Let $\mathcal{H}$ be a (non-random) countable partition of $S$ such that $H\in\mathcal{B}$ and $\nu(H)>0$ for all $H\in\mathcal{H}$. For $x\in S$, denote by $H_x$  the only $H\in\mathcal{H}$ such that $x\in H$. The conditional distribution for $\nu$ given the sub-$\sigma$-field generated by $\mathcal{H}$ is
\begin{gather*}
\alpha(\cdot\mid x)=\sum_{H\in\mathcal{H}}1_H(x)\,\nu\bigl(\cdot\mid H\bigr)=\nu\bigl(\cdot\mid H_x\bigr)\quad\text{for all }x\in S.
\end{gather*}
Hence, $X$ is exchangeable whenever
\begin{gather*}
\sigma_n(x)=\frac{c\,\nu+\sum_{i=1}^n\nu\bigl(\cdot\mid H_{x_i}\bigr)}{n+c}\quad\text{for all }n\ge 0\text{ and }x\in S^n.
\end{gather*}
Some remarks on the above strategy $\sigma$ are in order.
\begin{itemize}

\item $\sigma$ may be reasonable when the basic information provided by each observation $x_i$ is $H_{x_i}$, namely, the element of the partition $\mathcal{H}$ including $x_i$.

\item If $S$ is countable, each sub-$\sigma$-field $\mathcal{G}\subset\mathcal{B}$ is generated by a partition $\mathcal{H}$ of $S$. Hence, $\alpha$ is necessarily as above.

\item $\sigma_n(x)$ is absolutely continuous with respect to $\nu$ for all $n$ and $x$. This is a striking difference with classical Dirichlet sequences. To make an example, call $\sigma^*$ the strategy obtained by $\sigma$ replacing $\alpha$ with $\delta$. Under $\sigma^*$, $X$ is a classical Dirichlet sequence. Moreover, suppose $\nu$ is nonatomic and define the set $B(x)=\{x_1,\ldots,x_n\}$ for each $x=(x_1,\ldots,x_n)\in S^n$. Since $\nu$ is nonatomic and $B(x)$ is finite,
\begin{gather*}
P_\sigma\Bigl(X_{n+1}= X_i\text{ for some }i\le n\mid (X_1,\ldots,X_n)=x\Bigr)
\\=\sigma_n\bigl(x,\,B(x)\bigr)=0.
\end{gather*}
On the other hand, since $\delta_{x_i}(B(x))=1$ for each $i=1,\ldots,n$,
\begin{gather*}
P_{\sigma^*}\Bigl(X_{n+1}= X_i\text{ for some }i\le n\mid (X_1,\ldots,X_n)=x\Bigr)
\\=\sigma_n^*\bigl(x,B(x)\bigr)=n/(n+c).
\end{gather*}
As a consequence, one obtains
\begin{gather*}
P_\sigma\Bigl(\text{all the observations are distinct}\Big)=1,
\\P_{\sigma^*}\Bigl(\text{all the observations are distinct}\Big)=0.
\end{gather*}

\item $\sigma$ can be generalized replacing $\alpha$ with
\begin{gather*}
\beta(\cdot\mid x)=1_A(x)\,\delta_x\,+\,1_{A^c}(x)\,\nu\bigl(\cdot\mid A^c\cap H_x\bigr),
\end{gather*}
where $A\in\mathcal{B}$ is a suitable set. Note that $\beta$ reduces to $\alpha$ if $A=\emptyset$. Roughly speaking, $\beta$ is reasonable in those problems where there is a set $A$ such that $x_i$ is informative about the future observations only if $x_i\in A$. Otherwise, if $x_i\notin A$, the only relevant information provided by $x_i$ is $H_{x_i}$. As a trivial example, take $S=\mathbb{R}$ and
\begin{gather*}
\mathcal{H}=\bigl\{(-\infty,0),\,\{0\},\,(0,\infty)\bigr\},\quad A=[-u,u]
\end{gather*}
for some $u>0$. Then, $\beta$ is reasonable if $x_i$ is informative only if $\abs{x_i}\le u$. Otherwise, if $\abs{x_i}>u$, the only meaningful information provided by $x_i$ is its sign.
\end{itemize}
\end{ex}

\begin{ex}\label{gf5y7}\textbf{(P\'olya urns).} Some P\'olya urns are covered by Example \ref{g6s3}. It follows that, for such urns, the sequence $X$ of observed colors is exchangeable. To our knowledge, this fact was previously unknown.

\medskip

As an example, consider sequential draws from an urn and denote by $X_n$ the color of the ball extracted at time $n\ge 1$. At time $n=0$, the urn contains $m_j$ balls of color $j$ where $j\in\{1,\ldots,k\}$. Define
\begin{gather*}
S=\{1,\ldots,k\},\quad m=\sum_{j=1}^km_j\quad\text{and}\quad\nu\{j\}=\frac{m_j}{m}
\end{gather*}
for each $j\in S$. The sampling scheme is as follows. Fix a partition $\mathcal{H}$ of $S$ and define
\begin{gather*}
m_j^*=m\,\nu\bigl(\{j\}\mid H_j\bigr)=\frac{m\,m_j}{\sum_{i\in H_j}m_i}.
\end{gather*}
For each $n\ge 1$, one obtains $X_n\in H$ for some unique $H\in\mathcal{H}$. In this case (i.e., if $X_n\in H$) the extracted ball is replaced together with $m_j^*$ more balls of color $j$ for each $j\in H$. In other terms, if the observed color belongs to $H$, {\em each} color in $H$ is reinforced (and not only the observed color). In particular, after each draw, $m$ new balls are added to the urn. Hence, denoting by $\sigma$ the strategy of Example \ref{g6s3} with $c=1$, one obtains
\begin{gather*}
P\bigl(X_{n+1}=j\mid (X_1,\ldots,X_n)=x\bigr)
\\=\frac{m_j+\sum_{i=1}^n1_{H_j}(x_i)\,m_j^*}{m+m\,n}
\\=\frac{\nu\{j\}+\sum_{i=1}^n1_{H_j}(x_i)\,\nu\bigl(\{j\}\mid H_j\bigr)}{1+n}
\\=\frac{c\,\nu\{j\}+\sum_{i=1}^n\nu\bigl(\{j\}\mid H_{x_i}\bigr)}{c+n}=\sigma_n(x)\{j\}.
\end{gather*}
\end{ex}

\medskip

If $\sigma$ is the strategy \eqref{v690k3e}, in addition to exchangeability, $X$ satisfies various other properties of classical Dirichlet sequences. We refer to \cite{DIRGEN} for details. Here, we just note that the prior $\pi$ and the posterior $\pi_n$ can be explicitly determined. In particular, up to replacing $\delta$ with $\alpha$, the Sethuraman's representation of $\pi$ (see \cite{SET}) is still true. Precisely, $\pi$ is the probability distribution of a random probability measure $\mu$ of the form
\begin{gather*}
\mu(\cdot)=\sum_jV_j\,\alpha(\cdot\mid Z_j)
\end{gather*}
where:

\medskip

\begin{itemize}

\item $(Z_j)$ and $(V_j)$ are independent sequences of random variables;

\item $(Z_j)$ is i.i.d. with common distribution $\nu$;

\item $V_j=U_j\,\prod_{i=1}^{j-1}(1-U_i)$ for all $j\ge 1$, where $(U_i)$ is i.i.d. with common distribution beta$(1,c)$. Namely, $(V_j)$ has the {\em stick breaking distribution} with parameter $c$.

\end{itemize}

\section{Conditionally identically distributed data}\label{xe5ty7}

A sequence $Y=(Y_1,Y_2,\ldots)$ of random variables is {\em conditionally identically distributed} (c.i.d.) if $Y_2\overset{d}= Y_1$ and
\begin{equation*}
P\bigl(Y_k\in\cdot\mid Y_1,\ldots,Y_n\bigr)=P\bigl(Y_{n+1}\in\cdot\mid Y_1,\ldots,Y_n\bigr)\quad\text{a.s.}
\end{equation*}
for all $k>n\ge 1$. A c.i.d. sequence $Y$ is identically distributed. It is also asymptotically exchangeable in the sense that, as $n\rightarrow\infty$, the probability distribution of the shifted sequence $(Y_n,Y_{n+1},\ldots)$ converges weakly to an exchangeable law. Moreover, as already stressed, $Y$ is exchangeable if and only if it is stationary and c.i.d.

\medskip

C.i.d. sequences have been introduced in [\citealp{BPR2004}, \citealp{K}] and then investigated or applied in various papers; see e.g. [\citealp{ACBLG2014}, \citealp{BCL}, \citealp{BPR2012}, \citealp{BPR2013}, \citealp{BDPR2021}, \citealp{BPCID}, \citealp{CZGV}, \citealp{CSZ2023}, \citealp{FHW2021}, \citealp{FL2022}, \citealp{FPS2018}].

\medskip

There are reasons for taking c.i.d. data into account in Bayesian prediction. In fact, in a sense, c.i.d. sequences have been introduced having prediction in mind. If $X$ is c.i.d., at each time $n$, the future observations $(X_k:k>n)$ are identically distributed given the past, and this is reasonable in several prediction problems. Examples arise in clinical trials, generalized P\'olya urns, species sampling models, survival analysis and disease surveillance; see [\citealp{ACBLG2014}, \citealp{BCL}, \citealp{BPR2004}, \citealp{BDPR2021}, \citealp{BPCID}, \citealp{CZGV}, \citealp{CSZ2023}, \citealp{FHW2021}, \citealp{FL2022}, \citealp{FP2020}]. A further reason for assuming $X$ c.i.d. is that the asymptotics is very close to that of exchangeable sequences. As a consequence, a meaningful part of the usual Bayesian machinery can be developed under the sole assumption that $X$ is c.i.d.; see \cite{FHW2021}. Finally, the strategies which make $X$ c.i.d. can be easily characterized; see Theorem \ref{l77be4c} in the Appendix. Hence, unlike the exchangeable case, P.A. can be easily implemented for c.i.d. data. A number of interesting strategies, which cannot be used if $X$ is required to be exchangeable, become available if $X$ is only asked to be c.i.d.; see e.g. [\citealp{BDPR2021}, \citealp{BPCID}].

\medskip

As a concrete example, fix a constant $q\in (0,1)$ and define
\begin{gather}\label{gy3cd1z0}
\sigma_n(x)=q^n\nu+(1-q)\sum_{i=1}^nq^{n-i}\delta_{x_i}
\end{gather}
for all $n\ge 0$ and $x\in S^n$. Using $\sigma$ to make predictions corresponds to exponential smoothing. It may be reasonable when the forecaster has only vague opinions on the dependence structure of the data, and yet she feels that the weight of the $i$-th observation $x_i$ should be a decreasing function of $n-i$. In this case, $X$ is not exchangeable, since $\sigma_n(x)$ is not invariant under permutation of $x$, but it can be easily seen to be c.i.d.; see \cite[Ex. 7]{BDPR2021}.

\medskip

In this section, following [\citealp{BDPR2021}, \citealp{BPCID}], P.A. is applied to c.i.d. data. We first report some known strategies (Subsection \ref{z310jp}) and then we introduce two new strategies which make $X$ c.i.d. (Subsection \ref{a21j9b}).

\medskip

\subsection{Fast recursive update of predictive distributions}\label{z310jp}

A possible condition for a strategy $\sigma$ is
\begin{gather}\label{bg7y78u}
\sigma_{n+1}(x,y)\text{ is a function of }\sigma_n(x)\text{ and }y
\end{gather}
for all $n\ge 0$, $x\in S^n$ and $y\in S$, where $y$ denotes the $(n+1)$-th observation and
\begin{gather*}
(x,y)=(x_1,\ldots,x_n,y).
\end{gather*}
Under \eqref{bg7y78u}, the predictive $\sigma_{n+1}(x,y)$ is just a recursive update of the previous predictive $\sigma_n(x)$ and the last observation $y$. Recursive properties of this type are useful in applications. They have a long history (see e.g. [\citealp{NZ1999}, \citealp{N2002}, \citealp{SM1978}]) and have been recently investigated in \cite{HMW}.

\medskip

For each $n\ge 0$, let $q_n:S^n\rightarrow [0,1]$ be a measurable function (with $q_0$ constant) and $\alpha_n$ a kernel on $(S,\mathcal{B})$. Define a strategy $\sigma$ through the recursive equations
\begin{gather}\label{rb7z22k}
\sigma_0=\nu\quad\quad\text{and}
\\\sigma_{n+1}(x,y)=q_n(x)\,\sigma_n(x)+(1-q_n(x))\,\alpha_n(\cdot\mid y)\notag
\end{gather}
for all $n\ge 0$, $x\in S^n$ and $y\in S$. Since $\sigma_{n+1}(x,y)$ is a convex combination of the previous predictive $\sigma_n(x)$ and the kernel $\alpha_n(\cdot\mid y)$, which depends only on $y$, the strategy $\sigma$ satisfies condition \eqref{bg7y78u}. The obvious interpretation is that, at time $n+1$, after observing $(x,y)$, the next observation is drawn from $\sigma_n(x)$ with probability $q_n(x)$ and from $\alpha_n(\cdot\mid y)$ with probability $1-q_n(x)$.

\medskip

An example of strategy satisfying equation \eqref{rb7z22k} is Newton's algorithm [\citealp{NZ1999}, \citealp{N2002}]. More precisely, Newton's algorithm aims to estimate the latent distribution in a mixture model rather than to make predictions. However, if reinterpreted as a predictive rule, Newton's algorithm corresponds to a strategy $\sigma$ and such a $\sigma$ meets equation \eqref{rb7z22k} for a suitable choice of $q_n$ and $\alpha_n$; see e.g. \cite[p. 1095]{FP2020}. Moreover, as shown in \cite{FP2020}, $\sigma$ makes $X$ c.i.d.

\medskip

The strategies satisfying equation \eqref{rb7z22k} are investigated in \cite{BPCID}. Under such strategies, $X$ is usually not exchangeable but it is c.i.d. under some conditions on the kernels $\alpha_n$. Precisely, $X$ is c.i.d. if  $\alpha_n$ is the conditional distribution for $\nu$ given $\mathcal{G}_n$ for each $n\ge 0$, where
\begin{gather*}
\mathcal{G}_0\subset\mathcal{G}_1\subset\mathcal{G}_2\subset\ldots\subset\mathcal{B}
\end{gather*}
is any filtration (i.e., any increasing sequence of sub-$\sigma$-fields of $\mathcal{B}$). This condition is trivially true if $\alpha_n(\cdot\mid y)=\delta_y$ for all $y\in S$ (just take $\mathcal{G}_n=\mathcal{B}$ for all $n\ge 0$).

\medskip

\begin{ex}\label{v6yh8}\textbf{(Finer countable partitions).}
For each $n\ge 0$, let $\mathcal{H}_n$ be a countable partition of $S$ such that $H\in\mathcal{B}$ and $\nu(H)>0$ for all $H\in\mathcal{H}_n$. Suppose that $\mathcal{H}_{n+1}$ is finer than $\mathcal{H}_n$ for all $n\ge 0$. Define $\sigma$ through equation \eqref{rb7z22k} with
\begin{gather*}
\alpha_n(\cdot\mid y)=\sum_{H\in\mathcal{H}_n}1_H(y)\,\nu(\cdot\mid H)=\nu\bigl(\cdot\mid H_y^n\bigr)
\end{gather*}
where $H_y^n$ denotes the only $H\in\mathcal{H}_n$ such that $y\in H$. The kernel $\alpha_n$ is the conditional distribution for $\nu$ given $\mathcal{G}_n$, where $\mathcal{G}_n$ is the $\sigma$-field generated by $\mathcal{H}_n$. Since $\mathcal{H}_{n+1}$ is finer than $\mathcal{H}_n$, one obtains $\mathcal{G}_n\subset\mathcal{G}_{n+1}$. Hence, $X$ is c.i.d. Note also that the $\mathcal{H}_n$ could be chosen such that
\begin{gather*}
\{y\}=\bigcap_n H_y^n\quad\quad\text{for all }y\in S.
\end{gather*}
In this case, as $n\rightarrow\infty$, the partitions $\mathcal{H}_n$ shrink to the partition of $S$ in the singletons.

For instance, in Example \ref{g6s3}, suppose the forecaster wants to replace the fixed partition $\mathcal{H}$ with a sequence $\mathcal{H}_n$ of finer partitions. This is possible at the price of having $X$ c.i.d. instead of exchangeable. In fact, with $q_n=\frac{n+c}{n+1+c}$, one obtains
\begin{gather*}
\sigma_n(x)=\frac{c\,\nu+\sum_{i=1}^n\alpha_{i-1}(\cdot\mid x_i)}{n+c}
\\=\frac{c\,\nu+\sum_{i=1}^n\nu\bigl(\cdot\mid H_{x_i}^{i-1}\bigr)}{n+c}.
\end{gather*}
Similarly, to decrease the impact of the observed data while preserving the c.i.d. condition, the strategy \eqref{gy3cd1z0} could be modified as
\begin{gather*}
\sigma_n(x)=q^n\nu+(1-q)\sum_{i=1}^nq^{n-i}\nu\bigl(\cdot\mid H_{x_i}^{i-1}\bigr).
\end{gather*}
\end{ex}

\medskip

We next turn to a strategy introduced in \cite{HMW}. Once again, under this strategy, the data are c.i.d. but not necessarily exchangeable.

\medskip

\begin{ex}\label{f6y7ua1m}\textbf{(Hahn, Martin and Walker; Copulas).}
In this example, $S=\mathbb{R}$ and ``density function'' means ``density function with respect to Lebesgue measure''. A bivariate {\em copula} is a distribution function on $\mathbb{R}^2$ whose marginals are uniform on $(0,1)$. The density function of a bivariate copula, provided it exists, is said to be a {\em copula density}.

\medskip

In \cite{HMW}, in order to realize condition \eqref{bg7y78u}, the following updating rule is introduced. Fix a density $f_0$ and a sequence $c_1,c_2,\ldots$ of bivariate copula densities. For the sake of simplicity, we assume $f_0>0$ and $c_n>0$ for all $n\ge 1$. For $n=0$, define $\sigma_0(dz)=f_0(z)\,dz$ and call $F_0$ the distribution function corresponding to $\sigma_0$. Then, for each $y\in\mathbb{R}$, define
\begin{gather*}
\sigma_1(y,\,dz)=f_1(z\mid y)\,dz\quad\quad\text{where}
\\f_1(z\mid y)=c_1\bigl\{F_0(z),\,F_0(y)\bigr\}\,f_0(z).
\end{gather*}
In general, for each $n\ge 0$ and $x\in\mathbb{R}^n$, suppose $\sigma_n(x)$ has been defined and denote by $f_n(\cdot\mid x)$ and $F_n(\cdot\mid x)$ the density and the distribution function of $\sigma_n(x)$. Then, for all $y\in\mathbb{R}$, one can define
\begin{gather}\label{g6n83k0u}
\sigma_{n+1}(x,y,\,dz)=f_{n+1}(z\mid x,y)\,dz\quad\text{where}
\\f_{n+1}(z\mid x,y)=c_{n+1}\bigl\{F_n(z\mid x),\,F_n(y\mid x)\bigr\}\,f_n(z\mid x).\notag
\end{gather}
Equation \eqref{g6n83k0u} defines a strategy $\sigma$ dominated by the Lebesgue measure.

\medskip

In \cite{HMW} (but not here) the $c_n$ are also required to be symmetric. Furthermore, in \cite{HMW}, equation \eqref{g6n83k0u} is not necessarily viewed as a method for obtaining a strategy but is {\em deduced} as a consequence of exchangeability. From our point of view, instead, equation \eqref{g6n83k0u} defines a strategy $\sigma$ which we call HMW's strategy.

\medskip

Under HMW's strategy, $X$ is not necessarily exchangeable, even if the $c_n$ are symmetric and $c_n\rightarrow 1$ (in some sense) as $n\rightarrow\infty$. To see this, recall that $X$ is i.i.d. if and only if it is exchangeable and $X_1$ is independent of $X_2$. In turn, $X_1$ is independent of $X_2$ if $c_1$ is the independence copula density (i.e., $c_1(u,v)=1$ for all $(u,v)\in [0,1]^2$). Therefore, $X$ fails to be exchangeable whenever $c_1$ is the independence copula density and $c_2\neq c_1$. However, as noted in \cite{FHW2021}, $X$ turns out to be c.i.d.

\medskip

\begin{thm}\label{7y7u8bf} If $\sigma$ is HMW's strategy, then $X$ is c.i.d. \end{thm}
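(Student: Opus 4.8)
The plan is to verify the martingale characterization of c.i.d.\ strategies recalled in Theorem~\ref{l77be4c}. Writing $\mathcal{F}_n=\sigma(X_1,\dots,X_n)$ (with $\mathcal{F}_0$ trivial), the point is that $X$ is c.i.d.\ if and only if, for every $A\in\mathcal{B}$ and every $n\ge 0$, the predictives satisfy
$$\int_S \sigma_{n+1}(x,y)(A)\,\sigma_n(x)(dy)=\sigma_n(x)(A)\qquad\text{for }P_\sigma\text{-a.e. }x;$$
equivalently, $n\mapsto\sigma_n(X_1,\dots,X_n)(A)$ is an $(\mathcal{F}_n)$-martingale for each fixed $A$, the case $n=0$ delivering $X_2\overset{d}=X_1$ as a by-product. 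Thus everything reduces to computing the left-hand side for HMW's strategy and checking that it collapses to $\sigma_n(x)(A)$.

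First I would record the standing positivity: since $f_0>0$ and $c_n>0$ for all $n$, an immediate induction on \eqref{g6n83k0u} gives $f_n(\cdot\mid x)>0$ for all $n$ and $x$, so each conditional distribution function $F_n(\cdot\mid x)$ is a continuous strictly increasing bijection of $\mathbb{R}$ onto $(0,1)$; this is exactly what is needed for the substitution below. Next I would expand the left-hand side using the density form \eqref{g6n83k0u} together with Tonelli's theorem (legitimate since all integrands are nonnegative):
\begin{align*}
\int_{\mathbb{R}} \sigma_{n+1}(x,y)(A)\,\sigma_n(x)(dy)
&=\int_{\mathbb{R}}\Bigl(\int_A c_{n+1}\bigl\{F_n(z\mid x),F_n(y\mid x)\bigr\}\,f_n(z\mid x)\,dz\Bigr)f_n(y\mid x)\,dy\\
&=\int_A f_n(z\mid x)\Bigl(\int_{\mathbb{R}} c_{n+1}\bigl\{F_n(z\mid x),F_n(y\mid x)\bigr\}\,f_n(y\mid x)\,dy\Bigr)dz.
\end{align*}
In the inner integral I would substitute $v=F_n(y\mid x)$, so that $dv=f_n(y\mid x)\,dy$ and $y$ ranges over $\mathbb{R}$ precisely as $v$ ranges over $(0,1)$. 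Writing $u=F_n(z\mid x)$, the inner integral becomes $\int_0^1 c_{n+1}(u,v)\,dv$, which equals $1$ because $c_{n+1}$ is a copula density: integrating out one coordinate returns the uniform marginal density on $(0,1)$. Hence the whole expression equals $\int_A f_n(z\mid x)\,dz=\sigma_n(x)(A)$, which is the required identity, and the theorem follows.

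The delicate points are bookkeeping rather than substance. One must make sure that the conditional distribution of $X_{n+1}$ given $\mathcal{F}_n$ is indeed $\sigma_n$ ($P_\sigma$-almost surely, by Ionescu--Tulcea), so that the integrated identity transfers to the martingale statement; and one must invoke the marginal-uniformity property $\int_0^1 c(u,v)\,dv=1$ of copula densities, which holds only for Lebesgue-almost every $u$. The positivity of $f_n$ is what guarantees that the values $u=F_n(z\mid x)$ entering the computation genuinely fill out $(0,1)$, so this ``almost every'' causes no loss. Beyond these verifications, the argument is the short computation above.
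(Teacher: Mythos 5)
Your proof is correct and follows essentially the same route as the paper: both reduce the claim to condition \eqref{t6h8n} of Theorem \ref{l77be4c}, rewrite it in density form as $f_n(z\mid x)=\int f_{n+1}(z\mid x,y)\,f_n(y\mid x)\,dy$, and verify it using the marginal-uniformity of the copula density (your substitution $v=F_n(y\mid x)$ is just the explicit form of the paper's observation that $c\{F_1(\cdot),F_2(\cdot)\}f_1 f_2$ is a joint density with marginals $f_1,f_2$). The only difference is that the paper also verifies, by the same copula fact, that each $f_n(\cdot\mid x)$ integrates to one so that \eqref{g6n83k0u} genuinely defines a strategy, a point you leave implicit.
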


\medskip

A proof of Theorem \ref{7y7u8bf} is provided in the Appendix. We note that, for Theorem \ref{7y7u8bf} to hold, the positivity assumption on $f_0$ and $c_n$ may be dropped and the $c_n$ can be taken to be conditional copula densities; see Remark \ref{g73sd}.
\end{ex}

\medskip

\subsection{Further examples}\label{a21j9b}

In the next example, the data are exchangeable until a stopping time $T$ and then go on so as to form a c.i.d. sequence. The time $T$ should be regarded as the first time when something meaningful happens, possibly something modifying the nature of the observed phenomenon. Even if apparently involved, the example could find some applications. For instance, to model censored survival times, with $T-1$ the first time when a given number of survival times is observed.

\medskip

\begin{ex}\textbf{(Change points).}\label{wh77g5} A predictable stopping time is a function $T$ on $S^\infty$, with values in $\{2,3,\ldots,\infty\}$, satisfying
\begin{gather}\label{n6y9m1}
\bigl\{T={n+1}\bigr\}=\bigl\{(X_1,\ldots,X_n)\in A_n\bigr\}
\end{gather}
for some set $A_n\in\mathcal{B}^n$. Basically, condition \eqref{n6y9m1} means that the event $\{T=n+1\}$ depends only on $(X_1,\ldots,X_n)$. Similarly, $\{T\le n+1\}=\bigcup_{j=2}^{n+1}\{T=j\}$ depends only on $(X_1,\ldots,X_n)$. Therefore, for all $x\in S^n$ and $y\in S$, the indicators of $\{T\le n+1\}$ and $\{T>n+1\}$ depend on $x$ but not on $y$.

\medskip

Fix a predictable stopping time $T$ and a strategy $\beta=(\beta_0,\beta_1,\ldots)$ which makes $X$ exchangeable. Moreover, as in Subsection \ref{z310jp}, fix the measurable functions $q_n:S^n\rightarrow [0,1]$. Then, define $\sigma_0=\beta_0$, $\sigma_1=\beta_1$, and
\begin{gather*}
\sigma_{n+1}(x,y)=1_{\{T>n+1\}}(x)\,\beta_{n+1}(x,y)\,+
\\+\,1_{\{T\le n+1\}}(x)\,\Bigl\{q_n(x)\,\sigma_n(x)\,+\,(1-q_n(x))\,\delta_y\Bigr\}
\end{gather*}
for all $n\ge 1$, $x\in S^n$ and $y\in S$. In the Appendix, it is shown that:

\medskip

\begin{thm}\label{z34y9n1s} The above strategy $\sigma$ makes $X$ c.i.d. Moreover, if
\begin{gather*}
A_n\text{ is invariant under permutations of }S^n\text{ for all }n\ge 1,
\end{gather*}
where $A_n$ is the set involved in condition \eqref{n6y9m1}, then $(X_1,\ldots,X_n)$ is exchangeable conditionally on $T>n$. Precisely,
\begin{gather*}
P_\sigma\Bigl(\phi(X_1,\ldots,X_n)\in\cdot\mid T>n\Bigr)
\\=P_\sigma\Bigl((X_1,\ldots,X_n)\in\cdot\mid T>n\Bigr)
\end{gather*}
for all $n$ such that $P_\sigma(T>n)>0$ and all permutations $\phi$ of $S^n$.
\end{thm}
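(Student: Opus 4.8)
The plan is to reduce everything to the underlying exchangeable strategy $\beta$, exploiting that before the change time $\sigma$ coincides with $\beta$. Throughout I write $\mathcal{F}_n=\sigma(X_1,\ldots,X_n)$ and $\mu_n=\sigma_n(X_1,\ldots,X_n)$ for the (random) $n$-th predictive, and I will use the characterization recalled in Theorem \ref{l77be4c}: under $P_\sigma$, $X$ is c.i.d. if and only if the predictives form a measure-valued martingale, i.e. $E_\sigma[\mu_{n+1}(A)\mid\mathcal{F}_n]=\mu_n(A)$ a.s. for all $A\in\mathcal{B}$ and $n\ge 0$. Predictability of $T$ is exactly what makes this tractable: $\{T>n+1\}$, $\{T\le n+1\}$ and $q_n(X_1,\ldots,X_n)$ are all $\mathcal{F}_n$-measurable, and $\{T>m\}\in\mathcal{F}_{m-1}$.

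The first step I would establish is a coupling lemma: for every $m\ge 0$ one has $\mu_m=\beta_m(X_1,\ldots,X_m)$ on the event $\{T>m\}$, and consequently $P_\sigma(H\cap\{T>m\})=P_\beta(H\cap\{T>m\})$ for all $H\in\mathcal{F}_m$. The first assertion is immediate from the definition of $\sigma$ (on $\{T>m\}$ the indicator $1_{\{T>m\}}$ equals $1$, and $\sigma_0=\beta_0$, $\sigma_1=\beta_1$ handle $m\le 1$). The second I would prove by induction on $m$: since $\{T>m\}\in\mathcal{F}_{m-1}$ and, on $\{T>m\}\subseteq\{T>m-1\}$, the $P_\sigma$-conditional law of $X_m$ given $\mathcal{F}_{m-1}$ is $\mu_{m-1}=\beta_{m-1}$, conditioning on $\mathcal{F}_{m-1}$ turns the $P_\sigma$-integral of any $\mathcal{F}_m$-measurable function over $\{T>m\}$ into an $\mathcal{F}_{m-1}$-measurable function supported on $\{T>m-1\}$, to which the inductive hypothesis applies, giving the matching $P_\beta$-integral.

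For the c.i.d. assertion, fix $A\in\mathcal{B}$, $n\ge 1$ and $F\in\mathcal{F}_n$, and split $F$ along the $\mathcal{F}_n$-sets $\{T\le n+1\}$ and $\{T>n+1\}$. On $F\cap\{T\le n+1\}$ the predictive is $\mu_{n+1}=q_n\,\mu_n+(1-q_n)\,\delta_{X_{n+1}}$, so taking $E_\sigma[\cdot\mid\mathcal{F}_n]$ and using $E_\sigma[1_A(X_{n+1})\mid\mathcal{F}_n]=\mu_n(A)$ reproduces $\mu_n(A)$; this is the familiar fact that mixing the current predictive with $\delta_{X_{n+1}}$ preserves the martingale property. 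On $G:=F\cap\{T>n+1\}$ one has $\mu_{n+1}=\beta_{n+1}(X_1,\ldots,X_{n+1})$ and $\mu_n=\beta_n(X_1,\ldots,X_n)$; since $G\subseteq\{T>n+1\}$, the coupling lemma transfers the computation to $P_\beta$, where exchangeability (hence the c.i.d. property) of $\beta$ gives $E_\beta[\beta_{n+1}(X_1,\ldots,X_{n+1})(A)\mid\mathcal{F}_n]=\beta_n(X_1,\ldots,X_n)(A)$, and transferring back yields $E_\sigma[\mu_{n+1}(A)\,1_G]=E_\sigma[\mu_n(A)\,1_G]$. Adding the two pieces gives the martingale identity for every $F\in\mathcal{F}_n$; the case $n=0$ is the same with only the (full) piece $\{T>1\}$ present and also delivers $X_2\overset{d}{=}X_1$. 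Hence $X$ is c.i.d.

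For the conditional exchangeability, the coupling lemma at level $n$ shows that the conditional law of $(X_1,\ldots,X_n)$ given $\{T>n\}$ is the same under $P_\sigma$ and under $P_\beta$, so it suffices to prove the claim for the exchangeable law $P_\beta$. If $\{T>n\}=\{(X_1,\ldots,X_n)\in C\}$ with $C\subseteq S^n$ invariant under permutations, then $\{(X_1,\ldots,X_n)\in C\}=\{\phi(X_1,\ldots,X_n)\in C\}$, so $P_\beta(\phi(X_1,\ldots,X_n)\in B,\ (X_1,\ldots,X_n)\in C)=P_\beta(\phi(X_1,\ldots,X_n)\in B\cap C)=P_\beta((X_1,\ldots,X_n)\in B\cap C)$ by exchangeability of $\beta$, and dividing by $P_\beta((X_1,\ldots,X_n)\in C)$ gives precisely the asserted invariance of the conditional law. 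The one remaining point is to show that the hypothesis — each $A_i$ invariant under permutations of $S^i$ — forces $\{T>n\}=\bigcap_{i=1}^{n-1}\{(X_1,\ldots,X_i)\notin A_i\}$ to be invariant under permutations of $(X_1,\ldots,X_n)$, i.e. to be of the above form with symmetric $C$. I expect this to be the main obstacle: each factor $\{(X_1,\ldots,X_i)\notin A_i\}$ is symmetric only in its own prefix, so recasting the whole intersection as a genuinely permutation-invariant event in all $n$ coordinates is the delicate step in which the invariance of the $A_i$ must be used in full.
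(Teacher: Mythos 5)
Your proof of the c.i.d. assertion is correct and is essentially the paper's own argument: the paper too verifies condition \eqref{t6h8n} of Theorem \ref{l77be4c}, splits on the $\mathcal{F}_n$-measurable events $\{T\le n+1\}$ and $\{T>n+1\}$, handles the first by the direct computation with $q_n\,\sigma_n+(1-q_n)\,\delta_y$, and handles the second by falling back on the corresponding identity for $\beta$. Your explicit coupling lemma ($P_\sigma$ and $P_\beta$ agree on $\mathcal{F}_m$-events contained in $\{T>m\}$, proved by induction through the predictives) is a clean way to justify that transfer; the paper instead modifies $\beta$ on a $P_\beta$-null set, and your version is if anything tidier, since it makes evident that a $P_\beta$-null subset of $\{T>n+1\}$ is automatically $P_\sigma$-null. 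The paper also uses exactly your coupling lemma, without isolating it, in the second half (``it is easily seen that\dots'').

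The second half is where the genuine gap lies, and it sits precisely where you flagged it --- but the step you left open is not merely delicate, it is false. Writing $\{T>n\}=\bigcap_{i=1}^{n-1}\{(X_1,\ldots,X_i)\notin A_i\}$, the base of this event in $S^n$ is $\bigcap_{i<n}\bigl(A_i^c\times S^{n-i}\bigr)$, and invariance of each $A_i$ under permutations of $S^i$ says nothing about permutations that mix the first $i$ coordinates with the later ones. Concretely, take $S=\{0,1\}$, $A_1=\{1\}$ and $A_j=\emptyset$ for $j\ge 2$: every $A_j$ is invariant, yet $\{T>2\}=\{X_1=0\}$ has base $\{0\}\times S$ in $S^2$, which is not swap-invariant. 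So the reduction ``condition on a symmetric cylinder of an exchangeable law'' is unavailable. The paper's proof goes the other way: it decomposes $\{T>n\}$ over the \emph{future} values of $T$, via $P_\sigma\bigl(T>n,\,\phi(X_1,\ldots,X_n)\in C\bigr)=\sum_{j\ge n}P_\sigma\bigl(T=j+1,\,\phi(X_1,\ldots,X_n)\in C\bigr)$, transfers each summand to $P_\beta$ by the coupling lemma at level $j$, and then uses that $\{T=j+1\}=\{(X_1,\ldots,X_j)\in A_j\}$ with $A_j$ invariant under \emph{all} permutations of $S^j$ --- in particular under the extension of $\phi$ fixing coordinates $n+1,\ldots,j$ --- so that exchangeability of $P_\beta$ applies piece by piece. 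The invariance hypothesis is thus used only for indices $j\ge n$, never for $j<n$; this is the idea missing from your plan.

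One further caveat, which shows your unease was well founded: the decomposition $\{T>n\}=\bigcup_{j\ge n}\{T=j+1\}$ omits $\{T=\infty\}$, so the paper's argument is complete only when $T<\infty$ holds $P_\sigma$-a.s.\ (or after a separate treatment of $\{T=\infty\}$). Indeed, in the tiny example above one has $P_\sigma(T=\infty)=P_\sigma(X_1=0)>0$, and if $\beta$ makes $X$ i.i.d.\ uniform then $(X_1,X_2)$ is clearly not exchangeable given $X_1=0$, so the displayed identity of the theorem fails for $n=2$. In short: your first half stands; your second half needs to be rebuilt around the decomposition over $\{T=j+1\}$, $j\ge n$, together with an assumption or an extra argument disposing of $\{T=\infty\}$.
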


\medskip

Theorem \ref{z34y9n1s} is still valid if $\sigma$ is defined differently at the times subsequent to $T$. For instance, given a countable partition $\mathcal{H}$ of $S$, the conclusions of Theorem \ref{z34y9n1s} are true even if
\begin{gather*}
\sigma_{n+1}(x,y)=q_n(x)\,\sigma_n(x)+(1-q_n(x))\,\sigma_n(x,\,\cdot\mid H_y)
\end{gather*}
for all $x\in S^n$ and $y\in S$ such that $T\le n+1$ and $\sigma_n(x,\,H_y)>0$. Here, $\sigma_n(x,\,\cdot\mid H_y)$ denotes the probability measure
$$\sigma_n(x,\,A\mid H_y)=\frac{\sigma_n(x,\,A\cap H_y)}{\sigma_n(x,\,H_y)}\quad\quad\text{for all }A\in\mathcal{B}.$$

\medskip

Censored survival times are a possible application of $\sigma$. Suppose that $S=\{0,1\}\times (0,\infty)$ and the $i$-th observation is a pair $x_i=(j_i,t_i)$ where $t_i$ is the survival time of item $i$, or the time when item $i$ leaves the trial, according to whether $j_i=1$ or $j_i=0$. In this framework, $T-1$ could be the first time when a fixed number $k$ of survival times is observed, namely,
\begin{gather*}
T=1+\inf\bigl\{n:\sum_{i=1}^nj_i=k\bigr\}
\end{gather*}
with the usual convention $\inf\emptyset=\infty$. Finally, the strategy $\beta$ could be as in Subsection \ref{zd5n88y}. In fact, classical Dirichlet sequences are a quite popular model to describe censored survival times but have the drawback of ties. This drawback may be overcome if $\beta$ is of the form
$$\beta_n(x)=\frac{c\,\nu+\sum_{i=1}^n\alpha_{x_i}}{n+c},$$
where the kernel $\alpha$ satisfies the conditions of Subsection \ref{zd5n88y} and $\nu$ and $\alpha_x$ are nonatomic for all $x\in S$.
\end{ex}

\medskip

So far, the $n$-th predictive distribution has been meant as the conditional distribution of $X_{n+1}$ given $(X_1,\ldots,X_n)$. But the information available at time $n$ is often strictly larger than $(X_1,\ldots,X_n)$. To model this situation, we suppose to observe the sequence $$Y=(X_1,Z_1,X_2,Z_2,\ldots)$$
where $Z=(Z_1,Z_2,\ldots)$ is any sequence of random variables. The $Z_n$ can be regarded as covariates. At each time $n$, the forecaster aims to predict $X_{n+1}$ based on $(X_1,Z_1,\ldots,X_n,Z_n)$. She is not interested in $Z_{n+1}$ as such, but $Z_1,\ldots,Z_n$ can not be neglected since they are informative on $X_{n+1}$. Moreover, she wants $X$ to be c.i.d. and $Z$ unconstrained as much as possible. One solution could be a strategy which makes $Y$ c.i.d. However, if $Y$ is c.i.d., both $X$ and $Z$ are marginally c.i.d., and having $Z$ c.i.d. may be unwelcome. In the next example, $X$ is c.i.d. but $Z$ is not. In addition, $X$ satisfies a condition stronger than the c.i.d. one, that is, $X_2\overset{d}= X_1$ and
\begin{gather}\label{g7h8m}
P\bigl(X_k\in\cdot\mid X_1,Z_1,\ldots,X_n,Z_n\bigr)\\=P\bigl(X_{n+1}\in\cdot\mid X_1,Z_1,\ldots,X_n,Z_n\bigr)\notag
\end{gather}
a.s. for all $k>n\ge 1$; see \cite{BPR2004}.

\begin{ex}\textbf{(Covariates).}\label{h8uj92w}
Let $S=\mathbb{R}^2$ and
\begin{gather*}
0=b_0<b_1<b_2<\ldots,\quad\sup_nb_n\le 1,
\end{gather*}
a bounded strictly increasing sequence of real numbers. Take $\sigma_0$ as the probability distribution of $(U+V,V)$ where
\begin{gather*}
U\text{ independent of }V,\quad U\overset{d}=\mathcal{N}(0,b_1),\quad V\overset{d}=\mathcal{N}(0,1-b_1).
\end{gather*}
Similarly, for each $n\ge 1$ and
\begin{gather*}
y=(y_1,\ldots,y_n)=(x_1,z_1,\ldots,x_n,z_n),
\end{gather*}
take $\sigma_n(y)$ as the probability distribution of $(U_n(y)+V_n(y),\,V_n(y))$ where
\begin{gather*}
U_n(y)\text{ independent of }V_n(y),
\\U_n(y)\overset{d}=\mathcal{N}\bigl(x_n-z_n,\,b_{n+1}-b_n\bigr),
\\V_n(y)\overset{d}=\mathcal{N}(0,1-b_{n+1}).
\end{gather*}
Then, $Z$ is not c.i.d. while $X$ satisfies condition \eqref{g7h8m}. Furthermore, arguing as in \cite[Sect. 4]{BPCID}, the normal distribution could be replaced by any symmetric stable law.

\medskip

To see that $Z$ is not c.i.d., just note that $Z$ fails to be identically distributed. To prove condition \eqref{g7h8m}, take a collection $\bigl\{T_n, W_n:n\ge 1\bigr\}$ of independent standard normal random variables and define the sequence
$$Y^*=(X_1^*,Z_1^*,X_2^*,Z_2^*,\ldots),$$
where $Z_n^*=\sqrt{1-b_n}\,W_n$ and
$$X_n^*=\sum_{j=1}^n\sqrt{b_j-b_{j-1}}\,T_j+Z_n^*.$$
It is not hard to verify that $Y^*\overset{d}= Y$. Hence, it suffices to prove \eqref{g7h8m} with $Y^*$ in the place of $Y$, and this can be done as in \cite[Ex. 1.2]{BPR2004}. We omit the explicit calculations.
\end{ex}

\section{Stationary data}\label{x76t}

A sequence $Y=(Y_1,Y_2,\ldots)$ of random variables is {\em stationary} if
\begin{gather*}
(Y_2,\ldots,Y_{n+1})\overset{d}= (Y_1,\ldots,Y_n)\quad\quad\text{for all }n\ge 1.
\end{gather*}

\medskip

In the non-Bayesian approaches to prediction, stationarity is a classical assumption. In a Bayesian framework, instead, stationarity seems to be less popular. In particular, to our knowledge, there is no systematic treatment of P.A. for stationary data. This section aims to fill this gap and begins an investigation of P.A. when $X$ is required to be stationary. It is just a preliminary step and much more work is to be done.

\medskip

After some general remarks on Problem (*), two large classes of stationary sequences will be introduced. Incidentally, these two classes may look unusual for a Bayesian forecaster. We don't know whether this is true, but we recall that P.A. is consistent with any probability distribution for $X$. Hence, in a Bayesian framework, using data coming from such classes is certainly admissible.

\medskip

If $X$ is required to be stationary, for P.A. to apply, the strategies which make $X$ stationary should be characterized. Hence, one comes across Problem (*) with $\mathcal{C}$ the class of stationary probability measures on $(S^\infty,\mathcal{B}^\infty)$. This version of Problem (*) is quite hard and we are not aware of any general solution; see e.g. [\citealp{BLMC}, \citealp{MOWE}] and references therein. Fortunately, however, Problem (*) is simple (or even trivial) in a few special cases. As an example, a strategy $\sigma$ makes $X$ a stationary (first order) Markov chain if and only if
\begin{gather*}
\int\sigma_1(x,\,\cdot)\,\sigma_0(dx)=\sigma_0(\cdot)\quad\text{and}\quad\sigma_n(x)=\sigma_1(x_n)
\end{gather*}
for all $n\ge 1$ and $P_\sigma$-almost all $x\in S^n$. Even if obvious, this fact has a useful practical consequence. If the data are required to be stationary and Markov, in order to make Bayesian predictions, applying P.A. is straightforward.

\medskip

Another remark is that, unlike the exchangeable case, a finite dimensional stationary random vector can be always extended to an (infinite) stationary sequence. To formalize this fact, we first recall that the probability distribution of the random vector $(X_1,\ldots,X_n)$ is completely determined by $\sigma_0,\sigma_1,\ldots,\sigma_{n-1}$.

\begin{lem}\label{f6yn9ik3e}
Fix $n\ge 1$, select $\sigma_0,\sigma_1,\ldots,\sigma_{n-1}$ and define
\begin{gather*}
\sigma_j(u,x)=\sigma_{n-1}(x)
\end{gather*}
for all $j>n-1$, $u\in S^{j-n+1}$ and $x\in S^{n-1}$. Then, $X$ is stationary provided $(X_2,\ldots,X_n)\overset{d}= (X_1,\ldots,X_{n-1})$.
\end{lem}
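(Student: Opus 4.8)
The plan is to reduce the claim to the stationarity of an auxiliary Markov chain obtained by sliding a window of length $n-1$ along $X$. For $k\ge 1$ set $W_k=(X_k,X_{k+1},\ldots,X_{k+n-2})\in S^{n-1}$, so that $W_1=(X_1,\ldots,X_{n-1})$ and $W_2=(X_2,\ldots,X_n)$; the hypothesis is then exactly $W_2\overset{d}{=}W_1$. Since $X_k$ is the first coordinate of $W_k$, the sequence $X$ is recovered from $(W_k)_{k\ge 1}$ by the coordinatewise map $\Phi$ that reads off first coordinates, and this map intertwines the two shifts. Consequently, once I show that $(W_k)_{k\ge 1}$ is stationary, applying $\Phi$ to both $(W_k)_{k\ge 1}$ and its shift $(W_{k+1})_{k\ge 1}$ will yield $(X_2,X_3,\ldots)\overset{d}{=}(X_1,X_2,\ldots)$, which is precisely the stationarity of $X$.

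So the real work is to prove that $(W_k)_{k\ge 1}$ is a time-homogeneous Markov chain whose one-step marginal is left invariant by the hypothesis. First I would observe that $(W_1,\ldots,W_k)$ and $(X_1,\ldots,X_{k+n-2})$ generate the same $\sigma$-field, so the conditional law of $W_{k+1}$ given $W_1,\ldots,W_k$ is determined by the conditional law of $X_{k+n-1}$ given $(X_1,\ldots,X_{k+n-2})$. By the Ionescu-Tulcea description (Theorem \ref{tulc}), the latter is $\sigma_{k+n-2}(X_1,\ldots,X_{k+n-2})$ for $P_\sigma$-almost all histories. The key point is the index bookkeeping in the defining relation $\sigma_j(u,x)=\sigma_{n-1}(x)$: for $k\ge 2$ one has $j:=k+n-2\ge n>n-1$, and parsing $(X_1,\ldots,X_j)=(u,x)$ with $x\in S^{n-1}$ the last $n-1$ coordinates gives $x=(X_k,\ldots,X_{k+n-2})=W_k$, whence the conditional law of $X_{k+n-1}$ is $\sigma_{n-1}(W_k)$. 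For $k=1$ the same conclusion holds directly, since $\sigma_{n-1}(X_1,\ldots,X_{n-1})=\sigma_{n-1}(W_1)$. Because $W_{k+1}=(X_{k+1},\ldots,X_{k+n-1})$ is the deterministic shift of the last $n-2$ coordinates of $W_k$ together with the freshly drawn $X_{k+n-1}$, its conditional law given the past depends on $W_k$ alone and is given by one and the same kernel $Q$ for every $k\ge 1$. Thus $(W_k)_{k\ge 1}$ is a homogeneous Markov chain with transition kernel $Q$ and initial law $\mu=\mathrm{law}(W_1)$.

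To finish, I would invoke the standard fact that a homogeneous Markov chain is stationary iff its initial law is $Q$-invariant iff $W_1\overset{d}{=}W_2$. Indeed $\mathrm{law}(W_2)=\mu Q$, so the hypothesis $W_2\overset{d}{=}W_1$ reads $\mu Q=\mu$; since the laws of $(W_k)_{k\ge 1}$ and of the shifted chain $(W_{k+1})_{k\ge 1}$ are both determined by the common kernel $Q$ together with the initial laws $\mu$ and $\mu Q=\mu$ respectively, the two processes have the same law, i.e. $(W_k)_{k\ge 1}$ is stationary. Combined with the first paragraph, this gives stationarity of $X$.

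I expect the only delicate step to be the second paragraph: carefully translating the block-shift definition $\sigma_j(u,x)=\sigma_{n-1}(x)$ into the statement that the window process is Markov with a single kernel $Q$ valid for all $k\ge 1$ (including the initial transition $W_1\to W_2$), while keeping track of the $P_\sigma$-almost-sure qualifiers coming from Ionescu-Tulcea. Everything else—the recovery of $X$ from the window chain and the invariant-initial-law characterization of stationarity—is routine.
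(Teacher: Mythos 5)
Your proof is correct and takes essentially the same approach as the paper, whose entire argument is the one-line observation that under this strategy $X_{j+1}$ is conditionally independent of $(X_1,\ldots,X_{j-n+1})$ given $(X_{j-n+2},\ldots,X_j)$, i.e.\ that $X$ is a time-homogeneous Markov chain of order $n-1$. Your sliding-window chain $(W_k)$ is just the standard rigorous packaging of that observation, and both the index bookkeeping and the invariant-initial-law step are handled correctly.
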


Lemma \ref{f6yn9ik3e} is probably well known, but again we do not know of any explicit reference. Anyway, the proof is straightforward. It suffices to note that, under the strategy of Lemma \ref{f6yn9ik3e}, $X_{j+1}$ is conditionally independent of $(X_1,\ldots,X_{j-n+1})$ given $(X_{j-n+2},\ldots,X_j)$.

\medskip

A last remark is that Problem (*) admits an obvious solution for dominated strategies. In this case, incidentally, Problem (*) can be easily solved even for exchangeable data.

\begin{thm}\label{g78j92xcd1q}
Let $\lambda$ be a $\sigma$-finite measure on $(S,\mathcal{B})$ and $\sigma$ a strategy dominated by $\lambda$, say
\begin{gather*}
\sigma_0(dy)=f_0(y)\,\lambda(dy)\quad\text{and}\quad\sigma_n(x,\,dy)=f_n(y\mid x)\,\lambda(dy)
\end{gather*}
for all $n\ge 1$ and $x\in S^n$. Define
\begin{gather*}
g_n(x)=f_0(x_1)\,f_1(x_2\mid x_1)\ldots f_{n-1}(x_n\mid x_1,\ldots,x_{n-1})
\end{gather*}
for all $n\ge 1$ and $x\in S^n$. Then,

\begin{itemize}

\item $P_\sigma$ is stationary if and only if
\begin{gather*}
g_n(x)=\int g_{n+1}(u,x)\,\lambda(du)
\end{gather*}
for all $n\ge 1$ and $P_\sigma$-almost all $x\in S^n$.

\item $P_\sigma$ is exchangeable if and only if
\begin{gather*}
g_n(\phi(x))=g_n(x)
\end{gather*}
for all $n\ge 2$, all permutations $\phi$ of $S^n$ and $P_\sigma$-almost all $x\in S^n$.

\end{itemize}

\end{thm}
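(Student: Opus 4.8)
The plan is to recognize $g_n$ as the density of the law of $(X_1,\ldots,X_n)$ under $P_\sigma$ with respect to the product measure $\lambda^n=\lambda\otimes\cdots\otimes\lambda$, and then to read both stationarity and exchangeability as equalities between densities of suitable finite-dimensional laws. First I would establish that, for each $n\ge 1$, the law $\mu_n$ of $(X_1,\ldots,X_n)$ under $P_\sigma$ satisfies $\mu_n(dx)=g_n(x)\,\lambda^n(dx)$. This is immediate from the Ionescu-Tulcea factorization
\begin{gather*}
\mu_n(dx)=\sigma_0(dx_1)\,\sigma_1(x_1,dx_2)\cdots\sigma_{n-1}(x_1,\ldots,x_{n-1},dx_n),
\end{gather*}
together with the assumed densities $\sigma_k(\cdot,dy)=f_k(y\mid\cdot)\,\lambda(dy)$ and Fubini's theorem.

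\textbf{Stationarity.} Since $X$ is stationary if and only if $(X_2,\ldots,X_{n+1})\overset{d}=(X_1,\ldots,X_n)$ for every $n\ge 1$, I would compute the law of the shifted vector. Its density with respect to $\lambda^n$ is the first-coordinate marginal of $g_{n+1}$, namely the map $x\mapsto\int g_{n+1}(u,x)\,\lambda(du)$. Hence stationarity is equivalent to the a.e.\ equality $g_n(x)=\int g_{n+1}(u,x)\,\lambda(du)$, which is the asserted condition.

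\textbf{Exchangeability.} Because a permutation $\phi$ of $S^n$ preserves the product measure $\lambda^n$ (its Jacobian is a permutation matrix of determinant $\pm 1$), the law of $\phi(X_1,\ldots,X_n)$ has density $g_n\circ\phi^{-1}$ with respect to $\lambda^n$. Thus $\phi(X_1,\ldots,X_n)\overset{d}=(X_1,\ldots,X_n)$ if and only if $g_n\circ\phi^{-1}=g_n$ a.e.; letting $\phi$ range over all permutations (so that $\phi^{-1}$ does as well), this is exactly $g_n(\phi(x))=g_n(x)$.

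\textbf{Main obstacle.} The only delicate point is matching these ``a.e.'' equalities of densities with the ``$P_\sigma$-almost all $x$'' phrasing of the statement: equality of densities is naturally $\lambda^n$-a.e., whereas $P_\sigma$-a.e.\ on $S^n$ only controls the set $\{g_n>0\}$. I would reconcile the two by a normalization argument. If $g_n$ and a competing density $h$ (either $\int g_{n+1}(u,\cdot)\,\lambda(du)$ or $g_n\circ\phi^{-1}$) agree on $\{g_n>0\}$, then, since both integrate to $1$ against $\lambda^n$, one obtains
\begin{gather*}
\int_{\{g_n=0\}}h\,d\lambda^n=\int h\,d\lambda^n-\int_{\{g_n>0\}}g_n\,d\lambda^n=1-1=0,
\end{gather*}
so $h=0$ holds $\lambda^n$-a.e.\ on $\{g_n=0\}$ and therefore $g_n=h$ $\lambda^n$-a.e. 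Hence the $P_\sigma$-a.e.\ and $\lambda^n$-a.e.\ formulations coincide, and the two equivalences above are precisely those claimed.
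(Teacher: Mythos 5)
Your proof is correct and follows essentially the same route as the paper's (which simply observes that $g_n$ is a density of $(X_1,\ldots,X_n)$ and $\int g_{n+1}(u,\cdot)\,\lambda(du)$ a density of $(X_2,\ldots,X_{n+1})$ with respect to $\lambda^n$, and then invokes the definitions); your extra normalization argument reconciling the $P_\sigma$-a.e.\ and $\lambda^n$-a.e.\ formulations is a worthwhile detail the paper leaves implicit. One small fix: for a general $\sigma$-finite $\lambda$ on a Borel space there is no Jacobian to speak of --- the permutation invariance of $\lambda^n$ follows simply from its being an $n$-fold product of a single measure (uniqueness of the product measure / Fubini).
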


\medskip

The proof of Theorem \ref{g78j92xcd1q} is given in the Appendix.

\medskip

We finally give two examples. In both, $X$ is a stationary Markov sequence, possibly of order greater than 1.

\medskip

\begin{ex}\label{3w34r6gc1}\textbf{(Generalized autoregressive sequences).}
Let $S=\mathbb{R}$. Fix a probability measure $\mu$ on $\mathcal{B}$ and a measurable function $f:\mathbb{R}\rightarrow\mathbb{R}$. Define
\begin{gather*}
\sigma_1(x,A)=P\bigl(f(x)+U\in A)\quad\quad\text{for all }x\in\mathbb{R} \text{ and }A\in\mathcal{B},
\end{gather*}
where $U$ is a real random variable such that $U\overset{d}=\mu$. {\em Suppose} now that
\begin{gather}\label{g67y84f}
\int\sigma_1(x,A)\,\nu(dx)=\nu(A),\quad\quad A\in\mathcal{B},
\end{gather}
for some  probability measure $\nu$ on $\mathcal{B}$. Then, $X$ is a stationary Markov chain provided
\begin{gather*}
\sigma_0=\nu\quad\text{and}\quad\sigma_n(x)=\sigma_1(x_n)\quad\text{for all }n\ge 2\text{ and }x\in\mathbb{R}^n.
\end{gather*}
Note that $Y\overset{d}= P_\sigma$ for any sequence $Y=(Y_1,Y_2,\ldots)$ such that
\begin{gather*}
Y_1\overset{d}=\nu\quad\text{and}\quad Y_n=f(Y_{n-1})+U_n\text{ for }n\ge 2,
\end{gather*}
where $(U_n:n\ge 2)$ is i.i.d., independent of $Y_1$, and $U_2\overset{d}=\mu$. Thus, $\mu$ can be regarded as the distribution of the ``errors" $U_n$ and $\nu$ as the marginal distribution of the observations $Y_n$. For instance, the usual Gaussian (first order) autoregressive processes correspond to $f(x)=c\,x$, $\mu=\mathcal{N}(0,b)$ and $\nu=\mathcal{N}(0,\,b/(1-c^2))$, where $c\in (-1,1)$ and $b>0$ are constants.

\medskip

To make the above argument concrete, the following problem is to be solved: {\em For fixed $f$ and $\mu$, give conditions for the existence of $\nu$ satisfying equation \eqref{g67y84f}. More importantly, give an explicit formula for $\nu$ provided it exists}. We next focus on this problem in the (meaningful) special case where $\mu$ is a symmetric stable law.

\medskip

Let $\gamma\in (0,2]$ be a constant and $Z$ a real random variable with characteristic function
\begin{gather*}
E\bigl\{\exp(i\,t\,Z)\bigr\}=\exp\Bigl(-\frac{\abs{t}^\gamma}{2}\Bigr)\quad\quad\text{for all }t\in\mathbb{R}.
\end{gather*}
(The exponent $\gamma$ is usually denoted by $\alpha$, but this notation cannot be adopted in this paper since $\alpha$ denotes a kernel). For $a\in\mathbb{R}$ and $b>0$, denote by $\mathcal{S}(a,b)$ the probability distribution of $a+b^{1/\gamma}Z$, namely
\begin{gather*}
\mathcal{S}(a,b;\,A)=P\bigl(a+b^{1/\gamma}Z\in A)\quad\quad\text{for all }A\in\mathcal{B}.
\end{gather*}
The probability measure $\mathcal{S}(a,b)$ is said to be a symmetric stable law with exponent $\gamma$. Note that $\mathcal{S}(a,b)=\mathcal{N}(a,b)$ if $\gamma=2$ and $\mathcal{S}(a,b)=\mathcal{C}(a,b)$ if $\gamma=1$, where $\mathcal{C}(a,b)$ is the Cauchy distribution with density $f(x)=\frac{2\,b}{\pi}\,\frac{1}{b^2+4\,(x-a)^2}$ (the standard Cauchy distribution corresponds to $a=0$ and $b=2$).

\begin{thm}\label{c5rt7n}
Let $c\in (-1,1)$ be a constant. If $\mu=\mathcal{S}(a,b)$ and $f(x)=-a+c\,x$, then equation \eqref{g67y84f} is satisfied by
\begin{gather*}
\nu=\mathcal{S}\left(0,\,\frac{b}{1-\abs{c}^\gamma}\right).
\end{gather*}
\end{thm}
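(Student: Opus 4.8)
The plan is to read equation \eqref{g67y84f} as an invariance (stationarity) condition and then verify it through characteristic functions, which are the natural tool here since every distribution in sight is a symmetric stable law. Concretely, the identity $\int\sigma_1(x,A)\,\nu(dx)=\nu(A)$ for all $A\in\mathcal{B}$ says exactly that $\nu$ is invariant under the transition kernel $\sigma_1$; equivalently, if $W$ and $U$ are independent with $W\overset{d}=\nu$ and $U\overset{d}=\mu$, then $f(W)+U\overset{d}=\nu$. With $f(x)=-a+cx$ this reads $-a+cW+U\overset{d}=\nu$, so it suffices to check that the characteristic functions of the two sides agree.

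First I would record that, since $\mathcal{S}(a,b)$ is the law of $a+b^{1/\gamma}Z$ and $E\{\exp(itZ)\}=\exp(-\abs{t}^\gamma/2)$, the characteristic function of $\mathcal{S}(a,b)$ is
\begin{gather*}
t\longmapsto\exp\Bigl(ita-\frac{\abs{t}^\gamma}{2}\,b\Bigr),\qquad t\in\mathbb{R}.
\end{gather*}
Writing $\widehat{\mu}$ and $\widehat{\nu}$ for the characteristic functions of $\mu=\mathcal{S}(a,b)$ and of $\nu=\mathcal{S}(0,b_\nu)$ with $b_\nu=b/(1-\abs{c}^\gamma)$, we thus have $\widehat{\mu}(t)=\exp(ita-\abs{t}^\gamma b/2)$ and $\widehat{\nu}(t)=\exp(-\abs{t}^\gamma b_\nu/2)$.

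Next I would compute the characteristic function of $-a+cW+U$. Using independence of $W$ and $U$,
\begin{gather*}
E\bigl\{\exp(it(-a+cW+U))\bigr\}=e^{-ita}\,\widehat{\nu}(ct)\,\widehat{\mu}(t)=e^{-ita}\,\exp\Bigl(-\frac{\abs{ct}^\gamma}{2}\,b_\nu\Bigr)\,\exp\Bigl(ita-\frac{\abs{t}^\gamma}{2}\,b\Bigr).
\end{gather*}
The two location terms $e^{-ita}$ and $e^{ita}$ cancel, leaving $\exp\bigl(-\frac{\abs{t}^\gamma}{2}(\abs{c}^\gamma b_\nu+b)\bigr)$. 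This coincides with $\widehat{\nu}(t)=\exp(-\abs{t}^\gamma b_\nu/2)$ precisely when $\abs{c}^\gamma b_\nu+b=b_\nu$, i.e. when $b_\nu=b/(1-\abs{c}^\gamma)$, which is the stated value. By uniqueness of characteristic functions, $-a+cW+U\overset{d}=\nu$, and equation \eqref{g67y84f} follows.

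There is no deep obstacle here; once the problem is phrased as invariance, the computation is essentially forced. The two points worth care are, first, the role of the location term $-a$ in $f$: it is exactly what cancels the shift $a$ carried by $\mu$, so that $\nu$ can be taken centered at $0$, whereas a generic choice $f(x)=cx$ would not let the location close up. Second, the hypothesis $c\in(-1,1)$ guarantees $\abs{c}^\gamma<1$, so the scale equation $b_\nu(1-\abs{c}^\gamma)=b$ has a (unique) positive solution and $\nu$ is a genuine stable law; this is where the contraction assumption on $f$ enters, paralleling the classical Gaussian autoregressive case recovered at $\gamma=2$.
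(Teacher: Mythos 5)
Your proof is correct. The route differs from the paper's only in packaging: the paper first restates equation \eqref{g67y84f} as the measure identity $\int\sigma_1(x,\cdot)\,\nu(dx)=\nu(\cdot)$, reduces it by a change of variable (pushing $\nu$ forward under $x\mapsto cx$ to get $\nu^*=\mathcal{S}(0,r)$ with $r=b\abs{c}^\gamma/(1-\abs{c}^\gamma)$, the location $-a$ in $f$ cancelling the shift $a$ in $\mu$ exactly as you observe), and then invokes as a black box the convolution identity $\int\mathcal{S}(x,b;\,A)\,\mathcal{S}(0,r;\,dx)=\mathcal{S}(0,b+r;\,A)$, cited from the Claim in Theorem 10 of the authors' earlier paper. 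You instead recast \eqref{g67y84f} as the distributional fixed-point equation $f(W)+U\overset{d}=\nu$ and verify it directly with characteristic functions, which amounts to proving that convolution identity inline rather than citing it. The computations are the same at bottom; your version is self-contained and makes transparent where the two hypotheses enter (the $-a$ in $f$ kills the location, and $\abs{c}<1$ makes the scale equation $b_\nu=\abs{c}^\gamma b_\nu+b$ solvable), while the paper's version is shorter on the page because the stable-law convolution fact is outsourced. One minor point of care that you handled correctly but is worth stating explicitly if you write this up: the passage from $\int\sigma_1(x,A)\,\nu(dx)$ to $P\bigl(f(W)+U\in A\bigr)$ uses Fubini together with the independence of $W$ and $U$, which is legitimate since $\sigma_1(x,A)=P\bigl(f(x)+U\in A\bigr)$ is jointly measurable and bounded.
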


By Theorem \ref{c5rt7n}, which is proved in the Appendix, one obtains (first order) stationary autoregressive processes with any symmetric stable marginal distribution.
\end{ex}

\medskip

\begin{ex}\label{e41z9mht}\textbf{(Markov sequences of arbitrary order).}
Let $\lambda$ be a $\sigma$-finite measure on $(S,\mathcal{B})$. Fix $n\ge 2$ and a measurable function $h$ on $S^n$ such that $h>0$ and $\int h\,d\lambda^n=1$. Given $h$, define a further function $g$ via cyclic permutations of $h$, namely
\begin{gather*}
g(x)=\frac{1}{n}\,\bigl\{h(x_1,\ldots,x_n)+h(x_2,\ldots,x_n,x_1)+
\\+\ldots+h(x_n,x_1,\ldots,x_{n-1})\bigr\}
\end{gather*}
for all $x\in S^n$. Such a $g$ is still a density with respect to $\lambda^n$ (since $\int g\,d\lambda^n=1$) and satisfies
\begin{gather}\label{bv5rf7m}
g(x,y)=g(y,x)\quad\text{for all }x\in S^{n-1}\text{ and }y\in S.
\end{gather}
Next, define
\begin{gather*}
f_0(x)=\int g(x,v)\,\lambda^{n-1}(dv)\quad\quad\text{for all }x\in S,
\\f_{n-1}(x_n\mid x_1,\ldots,x_{n-1})=\frac{g(x)}{\int g(x_1,\ldots,x_{n-1},v)\,\lambda(dv)}
\end{gather*}
for all $x\in S^n$, and
$$f_{j-1}(x_j\mid x_1,\ldots,x_{j-1})=\frac{\int g(x,v)\,\lambda^{n-j}(dv)}{\int g(x_1,\ldots,x_{j-1},v)\,\lambda^{n-j+1}(dv)}$$
for all $2\le j\le n-1$ and $x\in S^j$. Finally, define a strategy $\sigma$ dominated by $\lambda$ as
\begin{gather*}
\sigma_0(dz)=f_0(z)\,\lambda(dz),
\\\sigma_j(x,\,dz)=f_j(z\mid x)\,\lambda(dz)
\end{gather*}
if $1\le j\le n-1$ and $x\in S^j$, and
\begin{gather*}\sigma_j(u,x)=\sigma_{n-1}(x)
\end{gather*}
if $j> n-1$, $u\in S^{j-n+1}$ and $x\in S^{n-1}$. Under $\sigma$, a density of $(X_1,\ldots,X_n)$ is given by $g$. By equation \eqref{bv5rf7m},
\begin{gather*}
\int g(v,x)\,\lambda(dv)=\int g(x,v)\,\lambda(dv)\quad\quad\text{for all }x\in S^{n-1}
\end{gather*}
and this in turn implies
\begin{gather*}
(X_2,\ldots,X_n)\overset{d}= (X_1,\ldots,X_{n-1}).
\end{gather*}
Therefore, $X$ is stationary because of Lemma \ref{f6yn9ik3e}. Note also that $X$ is a Markov sequence of order $n-1$.
\end{ex}

\section{Concluding remarks and open problems}

When prediction is the main target, P.A. has some advantages with respect to I.A. This is only our opinion, obviously, and we tried to support it along this paper. Even if one agrees, however, some further work is to be done to make P.A. a concrete tool. We close this paper with a brief list of open problems and possible hints for future research.

\medskip

\begin{itemize}

\item In various applications, the available information strictly includes the past observations on the variable to be predicted. For instance, as in Example \ref{h8uj92w}, suppose one aims to predict $X_{n+1}$ based on $(X_1,Z_1,\ldots,X_n,Z_n)$ where $Z_1,\ldots,Z_n$ are any random elements. Suppose also that $Z_1,\ldots,Z_n$ cannot be neglected for they are informative on $X_{n+1}$. In this case, one needs the conditional distribution of $X_{n+1}$ given $(X_1,Z_1,\ldots,X_n,Z_n)$. Situations of this type are practically meaningful and should be investigated further.

\medskip

\item Section \ref{x76t} should be expanded. It would be nice to have a general solution of Problem (*) for both the stationary and the stationary-ergodic cases. Further examples of stationary sequences (possibly, non-Markovian) would be welcome as well.

\medskip

\item Obviously, P.A. could be investigated under other distributional assumptions, in addition to exchangeability, stationarity and conditional identity in distribution. In particular, partial exchangeability should be taken into account.

\medskip

\item A question, related to Example \ref{f6y7ua1m}, is: Under what conditions $X$ is exchangeable when $\sigma$ is HMW's strategy ?

\medskip

\item While probably hard, the problem raised in Example \ref{3w34r6gc1} looks intriguing. In Theorem \ref{c5rt7n}, such a problem has been addressed when $\mu$ is a symmetric stable law and $f$ has a special form. What happens if $\mu$ and $f$ are arbitrary ?

\medskip

\item In case of I.A., the empirical Bayes point of view (where the prior is allowed to depend on the data) may be problematic. In case of P.A., instead, this point of view is certainly admissible. In fact, suppose a strategy $\sigma$ depends on some unknown constants, and an empirical Bayes forecaster decides to estimate these constants based on the available data. Acting in this way, she is merely replacing a strategy with another. Instead of $\sigma$, she is working with $\hat{\sigma}$, where $\hat{\sigma}$ is the strategy obtained from $\sigma$ estimating the unknown constants. This empirical form of P.A. looks reasonable and could be investigated.

\medskip

\end{itemize}

\begin{appendix}
\section*{Appendix}

This appendix contains the proofs of some claims scattered throughout the text. We will need the following characterization of c.i.d. sequences in terms of strategies.

\begin{thm}\textbf{(Theorem 3.1 of \cite{BPR2012}).}\label{l77be4c} Let $\sigma$ be a strategy. Then, $P_\sigma$ is c.i.d. if and only if
\begin{gather}\label{t6h8n}
\sigma_n(x,A)=\int\sigma_{n+1}(x,y,\,A)\,\sigma_n(x,\,dy)
\end{gather}
for all $n\ge 0$, all $A\in\mathcal{B}$ and $P_\sigma$-almost all $x\in S^n$.
\end{thm}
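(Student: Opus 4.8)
The plan is to reduce the full c.i.d.\ requirement to its one-step version and then translate that probabilistic statement into the asserted identity on $\sigma$ by disintegration. Throughout I write $\mathcal{F}_n=\sigma(X_1,\ldots,X_n)$ and $X^{(n)}=(X_1,\ldots,X_n)$. Since $S$ is a Borel subset of a Polish space, regular conditional distributions exist, and the versions supplied by the strategy satisfy, by Theorem \ref{tulc}, $P_\sigma(X_{n+1}\in A\mid\mathcal{F}_n)=\sigma_n(X^{(n)},A)$ a.s.\ (the case $n=0$ reducing to $P_\sigma(X_1\in A)=\sigma_0(A)$).

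First I would show that $P_\sigma$ is c.i.d.\ if and only if the one-step identity
$$P_\sigma(X_{n+1}\in A\mid\mathcal{F}_n)=P_\sigma(X_{n+2}\in A\mid\mathcal{F}_n)\quad\text{a.s.}$$
holds for every $n\ge 0$ and $A\in\mathcal{B}$, where the case $n=0$ encodes $X_2\overset{d}{=}X_1$. One direction is immediate. For the converse I would argue by induction on $k>n$: assuming $P_\sigma(X_k\in\cdot\mid\mathcal{F}_n)=P_\sigma(X_{n+1}\in\cdot\mid\mathcal{F}_n)$, I would apply the one-step identity at level $k-1$, namely $P_\sigma(X_{k+1}\in A\mid\mathcal{F}_{k-1})=P_\sigma(X_k\in A\mid\mathcal{F}_{k-1})$, take $E[\,\cdot\mid\mathcal{F}_n]$ of both sides (legitimate since $n\le k-1$), and use the tower property to collapse $X_{k+1}$ onto $X_k$ and hence onto $X_{n+1}$.

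Second, I would re-express both conditional probabilities through $\sigma$. The left-hand side equals $\sigma_n(X^{(n)},A)$ by the Ionescu--Tulcea relations. For the right-hand side I would use the tower property, $P_\sigma(X_{n+2}\in A\mid\mathcal{F}_n)=E[\sigma_{n+1}(X^{(n)},X_{n+1},A)\mid\mathcal{F}_n]$, and then disintegrate against the conditional law of $X_{n+1}$ given $\mathcal{F}_n$, which is exactly $\sigma_n(X^{(n)},\cdot)$. This gives
$$P_\sigma(X_{n+2}\in A\mid\mathcal{F}_n)=\int\sigma_{n+1}(X^{(n)},y,A)\,\sigma_n(X^{(n)},dy)\quad\text{a.s.}$$
Equating the two expressions turns the one-step identity into equation \eqref{t6h8n} for each fixed $n$ and $A$ outside a $P_\sigma$-null set, and the chain of equivalences is reversible, which yields the stated ``if and only if''.

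The step I expect to require the most care is the ordering of quantifiers. The c.i.d.\ condition naturally produces, for each fixed $A$, an exceptional null set possibly depending on $A$, whereas the theorem asserts a single null set valid for all $A$ simultaneously. I would settle this using that $\mathcal{B}$ is countably generated, $S$ being Borel in a Polish space: fixing a countable algebra $\mathcal{A}$ that generates $\mathcal{B}$, I obtain \eqref{t6h8n} off one null set for every $A\in\mathcal{A}$, and then observe that, for each fixed $x$, both $A\mapsto\sigma_n(x,A)$ and $A\mapsto\int\sigma_{n+1}(x,y,A)\,\sigma_n(x,dy)$ are probability measures on $\mathcal{B}$, so agreement on $\mathcal{A}$ extends to all of $\mathcal{B}$ by a monotone-class argument. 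Merging the countably many exceptional sets over $n$ then delivers the identity exactly as phrased.
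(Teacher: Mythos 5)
The paper does not actually prove Theorem \ref{l77be4c}: it is imported verbatim as Theorem 3.1 of \cite{BPR2012} and used as a black box in the proofs of Theorems \ref{7y7u8bf} and \ref{z34y9n1s}, so there is no in-paper argument to compare yours against. Judged on its own, your proof is correct and is essentially the standard derivation. The reduction of the full c.i.d.\ property to the one-step identity $P_\sigma(X_{n+2}\in A\mid\mathcal{F}_n)=P_\sigma(X_{n+1}\in A\mid\mathcal{F}_n)$ is sound: the induction on $k$, applying the one-step identity at level $k-1$ and collapsing with the tower property over $\mathcal{F}_n\subset\mathcal{F}_{k-1}$, works exactly as you describe, and the $n=0$ case correctly encodes $X_2\overset{d}=X_1$. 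The disintegration step is also right: $(x,y)\mapsto\sigma_{n+1}(x,y,A)$ is jointly measurable by the definition of a strategy, and since $\sigma_n(X^{(n)},\cdot)$ is a regular version of the law of $X_{n+1}$ given $\mathcal{F}_n$ (Theorem \ref{tulc}), the identity $E_\sigma[\sigma_{n+1}(X^{(n)},X_{n+1},A)\mid\mathcal{F}_n]=\int\sigma_{n+1}(X^{(n)},y,A)\,\sigma_n(X^{(n)},dy)$ holds a.s. Your handling of the null sets is more careful than the statement strictly requires (as phrased, the exceptional set may depend on $n$ and $A$), but the countably-generated-algebra plus monotone-class argument is the right way to get a single null set uniform in $A$, using that both sides of \eqref{t6h8n} are probability measures in $A$ for fixed $x$; this strengthening costs nothing and makes the equivalence cleanly reversible. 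In short: a complete and correct proof of a result the paper only cites.
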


\medskip

\begin{proof}[Proof of Theorem \ref{7y7u8bf}] In this proof, ``density function'' stands for ``density function with respect to Lebesgue measure''. We first recall a well known fact.

\medskip

Let $C$ be a bivariate copula and $F_1$, $F_2$ distribution functions on $\mathbb{R}$. Suppose that $C$, $F_1$ and $F_2$ all have densities, say $c$, $f_1$ and $f_2$, respectively. Then,
\begin{gather*}
F(x,y)=C\bigl\{F_1(x),F_2(y)\bigr\}
\end{gather*}
is a distribution function on $\mathbb{R}^2$ and
\begin{gather*}
f(x,y)=c\bigl\{F_1(x),F_2(y)\bigr\}\,f_1(x)\,f_2(y)
\end{gather*}
is a density of $F$. Therefore, for all $y\in\mathbb{R}$ with $f_2(y)>0$, one obtains
\begin{gather*}
\int c\bigl\{F_1(x),F_2(y)\bigr\}\,f_1(x)\,dx=\int\frac{f(x,y)}{f_2(y)}\,dx=1.
\end{gather*}

\medskip

We next show that equation \eqref{g6n83k0u} actually defines a strategy $\sigma$. Fix a density $f_0>0$ and a sequence $c_1,c_2,\ldots$ of strictly positive bivariate copula densities. For each $y\in\mathbb{R}$,
\begin{gather*}
\int f_1(z\mid y)\,dz=\int c_1\bigl\{F_0(z),F_0(y)\bigr\}\,f_0(z)\,dz=1
\end{gather*}
since $f_0(y)>0$. Moreover, $f_1(z\mid y)>0$ for all $z$ due to $f_0>0$ and $c_1>0$. Next, suppose that $f_n(\cdot\mid x)$ is a strictly positive density for some $n\ge 1$ and $x\in\mathbb{R}^n$. Then, for all $y\in\mathbb{R}$,
\begin{gather*}
\int f_{n+1}(z\mid x,y)\,dz
\\=\int c_{n+1}\bigl\{F_n(z\mid x),F_n(y\mid x)\bigr\}\,f_n(z\mid x)\,dz=1
\end{gather*}
since $f_n(y\mid x)>0$. Furthermore, $f_{n+1}(z\mid x,y)>0$ for all $z$ since $f_n(\cdot\mid x)>0$ and $c_{n+1}>0$. By induction, this proves that $f_n(\cdot\mid x)$ is a density for all $n\ge 1$ and $x\in\mathbb{R}^n$. Therefore, equation \eqref{g6n83k0u} defines a strategy $\sigma$ (called HMW's strategy in Example \ref{f6y7ua1m}).

\medskip

Finally, we prove that $P_\sigma$ is c.i.d. if $\sigma$ is HMW's strategy. By Theorem \ref{l77be4c}, it suffices to prove condition \eqref{t6h8n}. In turn, since $\sigma$ is dominated by the Lebesgue measure, condition \eqref{t6h8n} reduces to
\begin{gather*}
f_n(z\mid x)=\int f_{n+1}(z\mid x,y)\,f_n(y\mid x)\,dy
\end{gather*}
for all $n\ge 0$, almost all $z\in\mathbb{R}$ and $P_\sigma$-almost all $x\in\mathbb{R}^n$. Such a condition follows directly from the definition of $\sigma$. In fact, for all $n\ge 0$ an $x\in\mathbb{R}^n$, one obtains
\begin{gather*}
\int f_{n+1}(z\mid x,y)\,f_n(y\mid x)\,dy
\\=\int c_{n+1}\bigl\{F_n(z\mid x),F_n(y\mid x)\bigr\}\,f_n(z\mid x)\,f_n(y\mid x)\,dy
\\=f_n(z\mid x)\quad\quad\text{for almost all }z.
\end{gather*}
This concludes the proof.
\end{proof}

\medskip

\begin{rem}\label{g73sd} HMW's strategy $\sigma$ has been defined under the assumption that $f_0>0$ and $c_n>0$ for all $n\ge 1$. Such an assumption is superfluous and has been made only to avoid annoying complications in the definition of $\sigma$. Similarly, $X$ is c.i.d. even if the $c_n$ are conditional copulas, in the sense that they are allowed to depend on past data. Precisely, for each $n\ge 1$ and $x\in\mathbb{R}^n$, fix a bivariate copula density $c_{n+1}(\cdot\mid x)$. Then, the proof Theorem \ref{7y7u8bf} still applies if $f_{n+1}(z\mid x,y)$ is rewritten as
\begin{gather*}
f_{n+1}(z\mid x,y)=c_{n+1}\bigl\{F_n(z\mid x),\,F_n(y\mid x)\mid x\bigr\}\,f_n(z\mid x).
\end{gather*}
\end{rem}

\bigskip

\begin{proof}[Proof of Theorem \ref{z34y9n1s}] We show that $X$ is c.i.d. via Theorem \ref{l77be4c}. Fix $A\in\mathcal{B}$ and $n\ge 0$. Since $P_\beta$ is exchangeable (and thus c.i.d.) Theorem \ref{l77be4c} yields
\begin{gather}\label{d3m8t6x}
\beta_n(x,A)=\int\beta_{n+1}(x,y,\,A)\,\beta_n(x,\,dy)
\end{gather}
for $P_\beta$-almost all $x\in S^n$. Hence, up to changing $\beta$ on a $P_\beta$-null set, equation \eqref{d3m8t6x} can be assumed to hold for all $x\in S^n$. If $n=0$,
\begin{gather*}
\int\sigma_1(y,A)\,\sigma_0(dy)=\int\beta_1(y,A)\,\beta_0(dy)=\beta_0(A)=\sigma_0(A)
\end{gather*}
where the first equality is because $\sigma_0=\beta_0$ and $\sigma_1=\beta_1$ while the second follows from \eqref{d3m8t6x}. Next, suppose $n\ge 1$ and take $x\in S^n$ and $y\in S$. By assumption, the events $\{T>n+1\}$ and $\{T\le n+1\}$ depend on $x$ but not on $y$. If $T>n+1$, one obtains $\sigma_{n+1}(x,y)=\beta_{n+1}(x,y)$ and $\sigma_n(x)=\beta_n(x)$. Hence, equation \eqref{d3m8t6x} implies again
\begin{gather*}
\int\sigma_{n+1}(x,y,\,A)\,\sigma_n(x,\,dy)=\int\beta_{n+1}(x,y,\,A)\,\beta_n(x,\,dy)
\\=\beta_n(x,A)=\sigma_n(x,A).
\end{gather*}
Similarly, if $T\le n+1$,
\begin{gather*}
\int\sigma_{n+1}(x,y,\,A)\,\sigma_n(x,\,dy)
\\=\int\bigl\{q_n(x)\,\sigma_n(x,A)\,+\,(1-q_n(x))\,\delta_y(A)\bigr\}\,\sigma_n(x,\,dy)
\\=q_n(x)\,\sigma_n(x,A)+(1-q_n(x))\,\int\delta_y(A)\,\sigma_n(x,\,dy)
\\=\sigma_n(x,A).
\end{gather*}
In view of Theorem \ref{l77be4c}, this proves that $X$ is c.i.d.

\medskip

Finally, suppose that $A_n$ is invariant under permutations of $S^n$ for each $n\ge 1$. We have to show that $(X_1,\ldots,X_n)$ is exchangeable conditionally on $T>n$. Fix $n$, a set $C\in\mathcal{B}^n$, and a permutation $\phi$ of $S^n$. For each $j\ge n$, it is easily seen that
\begin{gather*}
P_\sigma\Bigl(T=j+1,\,\,\phi(X_1,\ldots,X_n)\in C\Bigr)
\\=P_\beta\Bigl(T=j+1,\,\,\phi(X_1,\ldots,X_n)\in C\Bigr).
\end{gather*}
Therefore,
\begin{gather*}
P_\sigma\Bigl(T=j+1,\,\,\phi(X_1,\ldots,X_n)\in C\Bigr)
\\=P_\beta\Bigl(T=j+1,\,\,\phi(X_1,\ldots,X_n)\in C\Bigr)
\\=P_\beta\Bigl((X_1,\ldots,X_j)\in A_j,\,\,\phi(X_1,\ldots,X_n)\in C\Bigr)
\\=P_\beta\Bigl((X_1,\ldots,X_j)\in A_j,\,\,(X_1,\ldots,X_n)\in C\Bigr)
\end{gather*}
where the last equality is because $P_\beta$ is exchangeable and $A_j$ is invariant under permutations of $S^j$. In turn, this implies
\begin{gather*}
P_\sigma\Bigl(T>n,\,\,\phi(X_1,\ldots,X_n)\in C\Bigr)
\\=\sum_{j\ge n}P_\sigma\Bigl(T=j+1,\,\,\phi(X_1,\ldots,X_n)\in C\Bigr)
\\=\sum_{j\ge n}P_\beta\Bigl(T=j+1,\,\,(X_1,\ldots,X_n)\in C\Bigr)
\\=\sum_{j\ge n}P_\sigma\Bigl(T=j+1,\,\,(X_1,\ldots,X_n)\in C\Bigr)
\\=P_\sigma\Bigl(T>n,\,\,(X_1,\ldots,X_n)\in C\Bigr).
\end{gather*}
This concludes the proof.
\end{proof}

\bigskip

\begin{proof}[Proof of Theorem \ref{g78j92xcd1q}]
Just note that $g_n$ is a density of $(X_1,\ldots,X_n)$ with respect to $\lambda^n$. Therefore, Theorem \ref{g78j92xcd1q} follows from the very definitions of stationarity and exchangeability, after noting that $\int g_{n+1}(u,\cdot)\,\lambda(du)$ is a density of $(X_2,\ldots,X_{n+1})$ with respect to $\lambda^n$.
\end{proof}

\bigskip

\begin{proof}[Proof of Theorem \ref{c5rt7n}]
We first recall that
\begin{gather*}
\int\mathcal{S}(x,b;\,A)\,\mathcal{S}(0,r;\,dx)=\mathcal{S}(0,b+r;\,A)
\end{gather*}
for all $A\in\mathcal{B}$ and $b,\,r>0$. This can be checked by a direct calculation. For a proof, we refer to the Claim of \cite[Th. 10]{BPCID}. Having noted this fact, define
\begin{gather*}
\mu=\mathcal{S}(a,b),\quad f(x)=-a+c\,x,\quad\nu=\mathcal{S}\left(0,\,\frac{b}{1-\abs{c}^\gamma}\right),
\end{gather*}
and denote by $Z$ a real random variable such that $Z\overset{d}=\mathcal{S}(0,1)$. Define also
\begin{gather*}
r=\frac{b\,\abs{c}^\gamma}{1-\abs{c}^\gamma},\quad h(x)=c\,x,
\end{gather*}
and call $\nu^*$ the probability distribution of $h$ under $\nu$. On noting that
\begin{gather*}
a+b^{1/\gamma}Z\overset{d}=\mu\quad\text{and}\quad\nu^*=\mathcal{S}(0,\,r),
\end{gather*}
one obtains
\begin{gather*}
\int\sigma_1(x,A)\,\nu(dx)=\int P\bigl(f(x)+a+b^{1/\gamma}Z\in A\bigr)\,\nu(dx)
\\=\int P\bigl(h(x)+b^{1/\gamma}Z\in A\bigr)\,\nu(dx)
\\=\int P\bigl(x+b^{1/\gamma}Z\in A\bigr)\,\nu^*(dx)
\\=\int\mathcal{S}(x,b;\,A)\,\mathcal{S}(0,r;\,dx)=\mathcal{S}(0,\,b+r;\,A)
\\=\mathcal{S}\left(0,\,\frac{b}{1-\abs{c}^\gamma};\,A\right)=\nu(A).
\end{gather*}
Therefore, equation \eqref{g67y84f} holds.
\end{proof}

\end{appendix}

\medskip

\textbf{Acknowledgments:} We are grateful to Federico Bassetti and Paola Bortot for very useful conversations.

\medskip

\end{document}